\newcommand{\Date}[1]{\def\@Date{#1}}
\def\today{\number\day~\ifcase\month\or
 January\or February\or March\or April\or May\or June\or
 July\or August\or September\or October\or November\or December\fi~\number\year}
\def\be{\begin{equation}}
\def\ee{\end{equation}}
\def\bea{\begin{eqnarray}}
\def\eea{\end{eqnarray}}
\def\bd{\begin{displaymath}}
\def\ed{\end{displaymath}}
\def\bda{\begin{eqnarray*}}
\def\eda{\end{eqnarray*}}
\def\bsm{\begin{small}}
\def\esm{\end{small}}
\def\t0{\theta_0}
\def\ha1{\hat \beta_1}
\def\bnt{\begin{enumerate}}
\def\ent{\end{enumerate}}
\def\bsc{\begin{scriptsize}}
\def\esc{\end{scriptsize}}
\newtheorem{theorem}{Theorem}
\newtheorem{lemma}{Lemma}
\theoremstyle{definition}
\newtheorem{condition}{Condition}
\newtheorem{example}{Example}
\newtheorem{remark}{Remark}
\newcommand{\figcaption}{\def\@captype{figure}\caption}
\newcommand{\tabcaption}{\def\@captype{table}\caption}
\renewcommand{\theequation}{\thesection.\arabic{equation}}
\begin{document}

\title{ \bf Distribution Regression}

\author{Xin Chen  ~~ Xuejun Ma ~~Wang Zhou \\ Department of Statistics and Applied Probability, National University of Singapore\\ stacx@nus.edu.sg~~ stamax@nus.edu.sg~~ stazw@nus.edu.sg}



\date{}
\maketitle

\begin{abstract}
Linear regression is a fundamental and popular statistical method. There are various kinds of linear regression, such as mean regression and quantile regression. In this paper, we propose a new one called distribution regression, which allows broad-spectrum of the error distribution in the linear regression. Our method uses nonparametric technique to estimate regression parameters. Our studies indicate that our method provides a better alternative than mean regression and quantile regression under many settings, particularly for asymmetrical heavy-tailed distribution or multimodal distribution of the error term.
Under some regular conditions, our estimator is $\sqrt n$-consistent and possesses the asymptotically normal distribution. The proof of the asymptotic normality of our estimator is very challenging because our nonparametric likelihood function cannot be transformed into sum of independent and identically distributed random variables. Furthermore, penalized likelihood estimator is proposed and enjoys the so-called oracle property with diverging number of parameters. Numerical studies also demonstrate the effectiveness and the flexibility of the proposed method.
\end{abstract}

\begin{quote}
\noindent
{\sl Keywords}: Cauchy's residue theorem; distribution regression; kernel density; linear regression; nonlinear optimization
\end{quote}

\begin{quote}
\noindent
{\sl MSC2010 subject classifications}: Primary 62J05; secondary 62J07
\end{quote}

\thispagestyle{empty}
\pagenumbering{gobble}

\newpage
\pagenumbering{arabic}

\setcounter{page}{1}

\section{Introduction}
Linear regression is an important tool to explore the relationship between the response and predictors in statistics, with much existing work including mean regression and quantile regression.

To study regression, we start with $\{(x_{i},Y_{i}),i=1,\dots,n\}$, a random sample from population $(x,Y)$, where $x$ is a p-dimensional predictor, $Y$ is a univariate response, and $(x_{i},Y_{i})$'s satisfy:
\begin{equation}\label{eq1}
  Y_{i}=\nu+x_{i}^\top\beta + \varepsilon_{i}, ~~~~i=1,\dots,n.
\end{equation}
Here $\beta=(b_1,\cdots,b_p)^\top$ is a p-dimensional vector of unknown parameters, $\nu$ is the intercept term, and the error terms $\varepsilon_{i}$'s  are independent and identically distributed (i.i.d.) with unknown density function $f$, and are assumed to be independent of $x_i$'s.

The commonly used method to estimate $\beta$  is to find $\widehat \beta$ which minimizes the objective function:
\begin{equation}\label{eq2}
  n^{-1}\sum_{i=1}^{n}\phi(Y_{i}-\nu-x_{i}^\top\beta).
\end{equation}
If $\phi(t)=t^2$, then $\widehat \beta$ is the mean regression or least square estimator. If  $\phi_{\tau}(t)=t(\tau-I(t<0))$ with $0\leq \tau \leq 1$, then $\widehat \beta$ is the quantile regression estimator; see \cite{Koenker:1978}. On the other hand, if the density function $f$ is known, the maximum likelihood is the best approach to estimating $\beta$. Clearly the maximum likelihood function is
\begin{equation}\label{eq3}
  \prod_{i=1}^{n}f(Y_{i}-\nu-x_{i}^\top\beta)
\end{equation}

As we known, if the error term follows a normal distribution, the maximum likelihood estimator is  equivalent to the mean regression estimator, and if $f$ is the Laplace density, the maximum likelihood estimator is the same as the quantile regression estimator (\cite{Geraci:2006}). There is a huge body of literature about the semiparametric models. The semiparametric efficiency for the estimation of the parametric part can be established,  provided that the density of the error is symmetric about zero and absolutely continuous with respect to the Lebesgue measure (\cite{Bickel:1993}). However, in reality, we do not know how the distribution of the error term looks like. We might have to test whether it is symmetric or not, unimodal or multimodal. This motivates us to apply nonparametric techniques to estimate $f$ as a first step. After this we plug our estimated $f$ into \eqref{eq3} and find the minimizer $\widehat \beta$. This is the method we propose in the article. We call it the distribution regression, which is a distribution free regression and can be considered as generalizations of both mean regression and quantile regression.

We show that our distribution regression estimator is robust and possesses nice finite and large sample properties. The establishment of our asymptotic theories is very challenging because our nonparametric likelihood function cannot be transformed into sum of i.i.d. random variables. Numerical studies indicate that the distribution regression is much better compared to existing methods under many settings, particularly for asymmetrical heavy-tailed distribution or multimodal distribution of the error term. Our procedure essentially can be applied to other models, such as single or multiple index models, partial linear model and varying coefficient model. As a general robust approach, it has potentials to be used in testing the goodness of parameter estimation of mean regression.

Another popular and powerful tool in the linear regression is the penalty. This idea may date back to \cite{Tibshirani:1996}, who studied the mean regression via LASSO. Later on, the adaptive LASSO (\cite{Zou:2006}),  SCAD (\cite{Fan:2001}), MCP (\cite{Zhang:2010}) and many others were proposed. Other related works to our paper are: SCAD and adaptive LASSO in the quantile regression (\cite{Wu:2009}), local quadratic approximation (LQA, \cite{Fan:2001}) and local linear approximation (LLA, \cite{Zou:2008}).
 In this article, we combine adaptive LASSO with our method under the high dimensional setting and adopt LQA in the numerical estimation.  Under regular conditions, our penalized likelihood estimator has the so-called oracle property with diverging number of parameters. Furthermore, a new iterative coordinate descent algorithm is suggested for our estimator, which is an extended version of coordinate descent algorithm (\cite{Wu:2008}).

The rest of this article is organized as follows. In Section \ref{sec2}, we propose the distribution regression and establish its theoretical property. Sparse estimation in the high dimension setting is presented in Section \ref{sec3}. In Section \ref{sec4}, we examine the finite sample performance of the proposed method via Monte Carlo simulations, and also illustrate the effectiveness through several empirical examples. 
All technical proofs of the main results are given in Appendix.

\section{Distribution Regression} \label{sec2}

 Given a set of i.i.d.  observations $\{z_{i}\}_{i=1}^{n}$, the classical kernel density estimator of  $f(z)$ is
$$
 \tilde f_{nh}(z)=\frac{1}{nh}\sum_{i=1}^{n}K\Big{(} \frac{z-z_{i}}{h}   \Big{)},
$$
where $K(\bullet)$ is the kernel function, and $h$ is the bandwidth. Technically it may happen that $\tilde f_{nh}(z)=0$ if $f(z)$ is close to $0$. In order to avoid such annoying situation, we define the kernel density estimator of  $f(z)$ in this article as
 \begin{equation}\label{eq4}
  f_{nh}(z)=\frac{1}{nh}\sum_{i=1}^{n}K\Big{(} \frac{z-z_{i}}{h}   \Big{)}+n^{-1000}.
\end{equation}

 The reason to have the extra term $n^{-1000}$ is that we have to deal with log-likelihood function in the theoretical development. In the numerical studies, this term is ignorable. The nonparametric likelihood function is given by

\begin{equation}\label{eq5}
L_{nh}(\beta)= \prod_{j=1}^{n}f_{nh}(\varepsilon_{j}).
\end{equation}
The  estimator $\widehat{\beta}$ can be found by maximizing \eqref{eq5}. Since it is based on the estimated joint distribution of error terms, we call $\widehat{\beta}$ the distribution regression estimator.
For simplicity, we denote the log-likelihood function $\log L_{nh}$ by $\ell_{nh}$.

Combining (\ref{eq4}) and (\ref{eq5}), we have

\begin{align}\label{eq6}
   &\ell_{nh}(\beta)=\sum_{j=1}^{n} \log\Big{[} \frac{1}{nh}\sum_{i=1}^{n}K\Big{(} \frac{\varepsilon_{j}-\varepsilon_{i}}{h}   \Big{)}  +n^{-1000} \Big{]} \notag \\
   & =  \sum_{j=1}^{n}\log \Big{[} \frac{1}{nh}\sum_{i=1}^{n}K\Big{(} \frac{(Y_{j}-\nu-x_{j}^\top\beta)-(Y_{i}-\nu-x_{i}^\top\beta)}{h}   \Big{)}  +n^{-1000} \Big{]}  \notag \\    
   & =  \sum_{j=1}^{n}\log \Big{[} \frac{1}{nh}\sum_{i=1}^{n} K\Big{(} \frac{(Y_{j}-Y_{i})- (x_{j}-x_{i})^\top \beta }{h}   \Big{)}  +n^{-1000} \Big{]} 
\end{align}

Clearly, (\ref{eq6}) is a nonlinear objective function. This is more complicated to handle than other ones. However, maximizing the nonlinear function as above is not as difficult as it looks. In the article, we adopt the algorithm of \cite{Varadhan:2009}, which optimizes a high-dimensional nonlinear objective function, and achieves fast computation.

\begin{remark}
The bandwidth $h$ is a tuning parameter.  Many bandwidth selection methods can be used here since our estimate is essentially the same as the usual kernel estimation. In the simulations, we use the plug-in method of \cite{Wand:1994} to select the bandwidth, which performs well in our numerical studies.
\end{remark}

\begin{remark}
The intercept term $\nu$ of regression model  disappears in (\ref{eq6}). So our method can't directly provide one estimator of the intercept term. However one can use other methods, such as mean regression, to get the estimator of the intercept term after obtaining $\widehat{\beta}$. From now on, we assume that $\nu=0$ in our model. This assumption doesn't have any influence on $\widehat \beta$.
\end{remark}

In order to state our main results, we need the following notation.  Denote by $\Omega$ the parameter space of $\beta$, $\beta_{0}$ the true value of $\beta$, and $\widehat{\beta}$ the estimator of $\beta_{0}$, $B(\kappa)=\{ \beta: \|\beta-\beta_{0}\|\leq \kappa\}$, a ball centered at $\beta_0$ with radius $\kappa$. Its boundary will be denoted by $\partial B(\kappa)$. Set $U=(x,Y)$, $U_{i}=(x_{i}, Y_{i}), i=1,\dots,n$. Let $\ell(\beta)$ be the log-likelihood function of the observations $U_{1}, \dots, U_{n}$. To facilitate asymptotic properties of our proposed estimator, we assume the following conditions.
\begin{condition}
\label{condition1}
  $$
 \|K\|_\infty:= \sup_{z\in R} | K(z)| < \infty , ~~\int_{z\in R}  | K(z)|^2 dz < \infty,~~ \lim_{|z|\rightarrow \infty }|zK(z)| =0;
  $$
  The first derivative $K'(z)$ is piecewise continuous and
  $$\|K'\|_\infty:=\inf \{C\geq 0: | K'(z)| \leq C \ \text{for almost all} \ z\in R\}<\infty.$$
\end{condition}

\begin{condition}
\label{condition2}
  There exists a positive constant $\alpha<1/2$ such that $ n^{1-\alpha}h/\log n\to \infty$ and $n^{\alpha}h^2\to 0$ as $n\to \infty$.
\end{condition}

\begin{condition}
\label{condition3}
 The probability density function $f(z)$ of the error term is uniformly continuous; the first and second derivatives $f'(z)$, $f''(z)$ exist a.e., and $\| f'\| _\infty< \infty, \|f''\|_\infty < \infty$; there exist constants $\kappa_0$ and $k\geq 1$  such that  $\sup_{p\geq 1}P_{\beta_0}(\min_{\beta\in B(\kappa_0)}f(Y-x^\top \beta)\leq n^{-k})=o(1/n)$ and $\sup_{p\geq 1}P_{\beta_0}(\min_{\beta\in B(\kappa_0)}$ $f(Y-x^\top \beta)\leq n^{-\alpha})=o(1/\log n)$ as $n\to \infty$. (Here $P_{\beta_0}$ means the true parameter is $\beta_0$. So does $E_{\beta_0}$.)
 \end{condition}

\begin{condition}
\label{condition4}
 $U_{1},\dots, U_{n}$ are i.i.d. with probability density $f(U,\beta)$ satisfying $\sup_{1\leq p<\infty}E\|x/p\|<\infty$.  Note that $f(U,\beta)=f(Y-x^\top\beta)h(x)$, where $h(x)$ is the marginal density of $x$.  The model is identifiable, i.e. $\forall \beta \neq \beta', \{ U: f(U, \beta) \neq f(U,\beta')\}$ is not a Null-set. There exists an open subset $\omega$ of $\Omega$ that contains the true parameter point $\beta_{0}$  for almost all $U$. Moreover,  $\sup_{\beta\in B(\kappa_0),p\geq 1}E_{\beta_0}|\log f(Y-x^\top\beta)|<\infty$, and for any $\kappa\in (0, \kappa_0]$
 \begin{equation} \label{C24-1}
\sup_{p\geq 1} \sup_{\beta\in \partial B(\kappa)}E_{\beta_0}\log\frac{ f(Y-x^\top\beta)}{ f(Y-x^\top\beta_0)}<0.
 \end{equation}
\end{condition}

\begin{condition}
\label{condition5}
For almost all observed $(Y,x)$, there exists a function $M(U)$ such that
$$
\sup_{\beta\in B(\kappa_0)}\|\frac{\partial}{\partial Y}\log f(Y-x^\top\beta)\| \leq M(U),
$$
where $\sup_{p\geq 1}E\big(M(U)\|x/\sqrt p\|\big)<\infty$. If $p\to \infty$,
$$
 \sup_{p\geq 1}E\big(M(U)\|x/\sqrt p\|\big)^3<\infty.
$$
\end{condition}

\begin{remark}
Conditions \ref{condition1}--\ref{condition2} are regular in the literature of kernel density estimator, which are used to prove the consistency of the kernel density estimator. Condition \ref{condition3} is actually very mild. Almost all the commonly seen distributions, such as (mixed) normal,  double exponential, Cauchy distributions, satisfy it.  Conditions \ref{condition4}  requires that  $E_{\beta_0}|\log f(Y-x^\top\beta)|<\infty$ if $\beta\in B(\kappa_0)$. This implies that $Y-x^\top\beta$ is $a.s.$ in the support of  $\varepsilon$. So the support of $\varepsilon$ should be the whole real line.
\end{remark}

\begin{theorem}
\label{theorem1}
Let $U_{1},\dots, U_{n}$ be i.i.d. from one population with the density function $f(U,\beta)$.
\begin{enumerate}
\item
Suppose $p$ is fixed. Under Conditions \ref{condition1}--\ref{condition5},  $\widehat{\beta}\to \beta_{0}$ in probability as $n\to \infty$.
\item
Suppose $p=o(n^{1-\alpha}h/\log n)$ as $n\to \infty$, and
\begin{equation} \label{momentgen}
\sup_{\beta\in B(\kappa_0),p\geq 1} E_{\beta_0}\exp\Big(\psi|\log f(Y-x^\top\beta)|\Big)<\infty
 \end{equation}
 for some positive $\psi$. Under Conditions \ref{condition1}--\ref{condition5},  $\widehat{\beta}-\beta_{0}\to 0$ in probability as $n\to \infty$.
\end{enumerate}
\end{theorem}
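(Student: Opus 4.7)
The plan is to run the classical Wald consistency argument for $M$-estimators, adapted to the fact that $M_n(\beta):=n^{-1}\ell_{nh}(\beta)$ is not a sum of i.i.d.\ terms. The overall goal reduces, via the continuity of $\ell_{nh}$ and the maximizing property of $\widehat\beta$, to showing that for every fixed $\kappa\in(0,\kappa_0]$,
\[
P_{\beta_0}\Bigl(\sup_{\beta\in\partial B(\kappa)}\ell_{nh}(\beta)<\ell_{nh}(\beta_0)\Bigr)\longrightarrow 1,
\]
which forces $\widehat\beta\in B(\kappa)$ eventually and hence $\widehat\beta\to\beta_0$ in probability.

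The first conceptual step is to identify the deterministic limit of $M_n$. Writing $\varepsilon_i(\beta)=\varepsilon_i+x_i^\top(\beta_0-\beta)$, the pseudo-residuals are, under $P_{\beta_0}$, i.i.d.\ with density $g_\beta$ equal to the convolution of $f$ with the law of $x^\top(\beta_0-\beta)$. Consequently the inner sum in (\ref{eq6}) is a textbook kernel estimator for $g_\beta$, and Conditions \ref{condition1}--\ref{condition3} supply the uniform rate $\sup_z|f_{nh}(z;\beta)-g_\beta(z)|=O_p((\log n/(nh))^{1/2}+h^2)$. Combined with the lower bounds of Condition \ref{condition3} (which, via the $n^{-1000}$ stabilizer, keep the argument of $\log$ away from zero outside a negligible event), this yields the pointwise limit $M_n(\beta)\to M(\beta):=E_{\beta_0}\log g_\beta(Y-x^\top\beta)$ in probability. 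Identifiability of $\beta_0$ as the unique maximizer of $M$ on $B(\kappa_0)$ is the role of (\ref{C24-1}) together with the model identifiability clause in Condition \ref{condition4}: these together rule out any competing maximizer on $\partial B(\kappa)$ for $\kappa\in(0,\kappa_0]$.

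To lift the pointwise statement to a uniform one over $B(\kappa_0)$, I would invoke stochastic equicontinuity: on the good event, $\beta\mapsto\log f_{nh}(\varepsilon_j(\beta);\beta)$ is Lipschitz in $\beta$ with a modulus controlled by $\|K'\|_\infty/h$ multiplied by the envelope $M(U)\|x/\sqrt p\|$ from Condition \ref{condition5}, and a standard chaining/covering argument reduces uniformity to the pointwise bound. In the fixed-$p$ case of part~(1) this is routine. For part~(2), a cover of $B(\kappa_0)\subset\mathbb R^p$ by $\eta$-balls has cardinality of order $(C/\eta)^p$, so the union-bound cost is $O(p\log(C/\eta))$, which is absorbed under $p=o(n^{1-\alpha}h/\log n)$ provided one replaces Markov by an exponential concentration inequality---which is exactly what the moment-generating hypothesis (\ref{momentgen}) supplies. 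The single most delicate issue, and the main obstacle, is the behavior of the logarithm when $f_{nh}$ is small: the two parts of Condition \ref{condition3} (a polynomial lower bound ruled out with probability $o(1/n)$ and an $n^{-\alpha}$ bound ruled out with probability $o(1/\log n)$) are precisely tailored so that, after the $n^{-1000}$ cushion takes over on pathological atoms, the bad event---on which either $f_{nh}$ underflows or $f(Y-x^\top\beta)$ drops below $n^{-\alpha}$---contributes only $o(1)$ to $\sup_{\beta\in B(\kappa_0)}|M_n(\beta)-M(\beta)|$, and this is where the non-i.i.d.\ nature of the nonparametric likelihood flagged in the abstract exerts its force.
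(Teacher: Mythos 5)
Your overall skeleton matches the paper's: reduce consistency to $P_{\beta_0}\big(\sup_{\beta\in\partial B(\kappa)}\ell_{nh}(\beta)<\ell_{nh}(\beta_0)\big)\to 1$, split the normalized log-likelihood into a kernel-approximation error plus an exact-likelihood average, dispose of the small-density events through the two tail bounds in Condition \ref{condition3} together with the $n^{-1000}$ cushion, and for diverging $p$ pay the $O(p\log(C/\eta))$ covering cost with the exponential concentration supplied by \eqref{momentgen}. This is exactly the content of Lemmas \ref{lemma1} and \ref{lemma2} and the short argument following them in the Appendix.

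The gap is in your separation step. You identify the population limit of $n^{-1}\ell_{nh}(\beta)$ as $M(\beta)=E_{\beta_0}\log g_\beta(Y-x^\top\beta)$, where $g_\beta$ is the convolution of $f$ with the law of $x^\top(\beta_0-\beta)$; this is indeed what a kernel estimator built from the pseudo-residuals targets when $\beta\neq\beta_0$. But condition \eqref{C24-1} controls $E_{\beta_0}\log f(Y-x^\top\beta)-E_{\beta_0}\log f(Y-x^\top\beta_0)$, a different functional, and neither it nor the identifiability clause of Condition \ref{condition4} says anything about $M(\beta)$. To close your argument you would need $\sup_{\beta\in\partial B(\kappa)}M(\beta)<M(\beta_0)=-H(f)$, i.e.\ that convolving $f$ with the nondegenerate law of $x^\top(\beta_0-\beta)$ strictly increases differential entropy, uniformly over the sphere (and, in part 2, uniformly in $p$); that is an entropy-power-type argument you never supply, and the uniform gap does not come for free from pointwise strict inequality. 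The paper sidesteps this by a different (and stronger) claim: Lemma \ref{lemma2} asserts that $f_{nh}(Y_i-x_i^\top\beta)$ is uniformly close to $f(Y_i-x_i^\top\beta)$ itself over all of $B(\kappa_0)$, so that Lemma \ref{lemma1} plus \eqref{C24-1} apply verbatim. Either you adopt the paper's identification (in which case your convolution discussion is a red herring and should be dropped), or you keep yours (in which case you owe the entropy separation argument in place of \eqref{C24-1}); as written, the proposal does neither, and the final inequality on $\partial B(\kappa)$ is asserted rather than proved.
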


Theorem \ref{theorem1} shows that the distribution regression estimator is consistent  under regular conditions.

In order to derive the central limit theorem for $\widehat \beta$, we need more conditions. Let $C(\kappa)=\{ \beta: \|\beta-\beta_{0}\|\leq \kappa\}$, a complex ball in ${\bf C}^p$ centered at $\beta_0$ with radius $\kappa$. Here each coordinate of $\beta$ is complex. We also need to extend the domain of $f(z)$ from the real line to the complex plane. Here we take the double exponential distribution as one example to see how to define $f(z)$ in the complex plane. In this case, $f(z)=e^{-|z|}/2$. For complex $\beta$, we define
$$
f(Y-x^\top\beta)=
\left\{
\begin{array}{cl}
e^{-(Y-x^\top\beta)} /2     &   \text{if} ~~ \Re(Y-x^\top \beta)\geq 0, \\
e^{(Y-x^\top\beta)} /2    &   \text{if} ~~ \Re(Y-x^\top \beta)< 0.
\end{array} \right.
$$


\begin{condition}
\label{condition6}
There exist constants $\kappa_0>0$ and $k\geq 1$ such that  $P(\min_{\beta\in C(\kappa_0)}$ $\|f(Y-x^\top \beta)\|\leq n^{-k})=o(1/n)$ and $P(\min_{\beta\in C(\kappa_0)}\|f(Y-x^\top \beta)\|\leq n^{-\alpha})=o(1/\log n)$ as $n\to \infty$,  and $E_{\beta_0}\log f(Y-x^\top\beta)$ is an analytic function of each coordinate of $\beta$ in $C(\kappa_0)$; for any $\beta\in C(\kappa_0)$, $f'(Y-x^\top\beta)$ and $f''(Y-x^\top\beta)$ exist for almost all observed $(Y,x)$; $\sup_{\beta\in C(\kappa_0)} \|f'(Y-x^\top\beta)\|_\infty<\infty$, $\sup_{\beta\in C(\kappa_0)} \|f''(Y-x^\top\beta)\|_\infty<\infty$, where
\begin{eqnarray*}
\|f'(Y-x^\top\beta)\|_\infty=\inf\{C>0:P(\|f'(Y-x^\top\beta)\|\leq C)=1\},\\
\|f''(Y-x^\top\beta)\|_\infty=\inf\{C>0:P(\|f''(Y-x^\top\beta)\|\leq C)=1\};
\end{eqnarray*}
 for almost all observed $(Y,x)$, $\log f(Y-x^\top\beta)$ is an analytic function of each coordinate of $\beta$ in $C(\kappa)$ if $\kappa$, which may depend on $(Y,x)$, is sufficiently small and positive.
\end{condition}

\begin{condition}
\label{condition7}
 The first and second derivatives of $\log f(Y-x^\top\beta)$ satisfy the equations
  $$
  E_{\beta_0}\Big{[}   \frac{\partial}{\partial b_{j}}\log f(Y-x^\top\beta_0)   \Big{]}
  =\frac{\partial}{\partial b_{j}}E_{\beta_0}\Big{[}   \log f(Y-x^\top\beta_0)   \Big{]} =0 ~~j=1, \dots, p,
  $$
and
\begin{align*}
\bm{I}_{jk}(\beta_0)& = E_{\beta_0}\Big{[}   \frac{\partial}{\partial b_{j}}\log f(Y-x^\top\beta_0)\cdot \frac{\partial}{\partial b_{k}} \log f(Y-x^\top\beta_0) \Big{]}\\
  &=  E_{\beta_0}\Big{[} -  \frac{\partial^{2}}{\partial b_{j}\partial b_{k}} f(Y-x^\top\beta_0) \Big{]}\\
  &=\frac{\partial^{2}}{\partial b_{j}\partial b_{k}} E_{\beta_0}\Big{[} -   f(Y-x^\top\beta_0) \Big{]}.
\end{align*}
The Fisher information matrix
$$
\bm{I}(\beta) = E_{\beta}\Big{\{} \Big{[} \frac{\partial}{\partial \beta }\log f(Y-x^\top\beta) \Big{]} \Big{[}   \frac{\partial}{\partial \beta }\log f(Y-x^\top\beta) \Big{]}^\top \Big{\}}
$$
is finite and positive definite at $\beta=\beta_{0}$.
\end{condition}

\begin{condition}
\label{condition8}
 $\sup_{\beta\in C(\kappa_0),p\geq 1}E_{\beta_0}|\log f(Y-x^\top\beta)|<\infty$.
Condition \ref{condition5} still holds if one replaces $B(\kappa)$ by $C(\kappa_0)$.
\end{condition}

\begin{theorem}
\label{theorem2}
Let $U_{1},\dots, U_{n}$ be i.i.d. from one population with the density function $f(U,\beta)$.
Under Conditions \ref{condition1}--\ref{condition8}, if $p$ is fixed, $\sqrt{n} (\widehat{\beta} -\beta_{0}) $ is asymptotically normal with mean zero and covariance matrix $[\bm{I}(\beta_{0})]^{-1}$ .
\end{theorem}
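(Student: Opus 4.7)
The plan is to follow the classical $M$-estimator Taylor expansion, with the additional twist that Cauchy's integral formula, applied on the complex ball $C(\kappa_0)$ from Condition~\ref{condition6}, is used to transfer uniform estimates on $\ell_{nh}$ itself into estimates on its first and second $\beta$-derivatives. Since $\widehat\beta$ maximizes $\ell_{nh}$ and, by Theorem~\ref{theorem1}, $\widehat\beta\to\beta_0$ in probability, the score equation $\nabla\ell_{nh}(\widehat\beta)=0$ and a Taylor expansion around $\beta_0$ give
$$
\sqrt n\,(\widehat\beta-\beta_0) = -\bigl[\,n^{-1}\nabla^2\ell_{nh}(\beta^*)\,\bigr]^{-1}\cdot n^{-1/2}\nabla\ell_{nh}(\beta_0),
$$
for some $\beta^*$ on the segment from $\widehat\beta$ to $\beta_0$. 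It therefore suffices to show that (a) $n^{-1/2}\nabla\ell_{nh}(\beta_0)\Rightarrow N(0,\bm{I}(\beta_0))$, and (b) $n^{-1}\nabla^2\ell_{nh}(\beta^*)\to -\bm{I}(\beta_0)$ in probability; the conclusion then follows from Slutsky's theorem.

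The central difficulty, flagged in the abstract, is that $\ell_{nh}$ is not a sum of i.i.d.\ terms because each $\varepsilon_j$ acts both as an evaluation point and as a sample point in the kernel average. To sidestep this, I would introduce the oracle log-likelihood $\ell_n(\beta)=\sum_{j=1}^n\log f(Y_j-x_j^\top\beta)$, which by Condition~\ref{condition6} extends analytically in $\beta$ to the complex ball $C(\kappa_0)$, and establish the uniform bound
$$
\sup_{\beta\in C(\kappa)}\bigl|\ell_{nh}(\beta)-\ell_n(\beta)\bigr| = o_P(\sqrt n)
$$
on a small complex disc of radius $\kappa$ around $\beta_0$, using standard kernel-density uniform consistency together with Conditions~\ref{condition1}--\ref{condition3} and the regularization $n^{-1000}$ in \eqref{eq4} that prevents the logarithm from blowing up on the negligible event where $f_{nh}$ is small. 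Applying Cauchy's integral formula on circles around each coordinate of $\beta_0$ then converts this uniform closeness of values into the coordinate-wise derivative bounds
$$
\frac{1}{\sqrt n}\bigl(\nabla\ell_{nh}(\beta_0)-\nabla\ell_n(\beta_0)\bigr) = o_P(1), \qquad \frac{1}{n}\bigl(\nabla^2\ell_{nh}(\beta^*)-\nabla^2\ell_n(\beta^*)\bigr) = o_P(1).
$$

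Once these oracle reductions hold, the remainder is standard. By Condition~\ref{condition7} each summand of $\nabla\ell_n(\beta_0)$ is mean zero with covariance $\bm{I}(\beta_0)$, so the multivariate CLT (with moments controlled by Condition~\ref{condition5}) yields $n^{-1/2}\nabla\ell_n(\beta_0)\Rightarrow N(0,\bm{I}(\beta_0))$; a LLN together with the second-derivative identity in Condition~\ref{condition7} and the consistency of $\beta^*$ give $n^{-1}\nabla^2\ell_n(\beta^*)\to -\bm{I}(\beta_0)$ in probability, where uniform derivative bounds on $\log f$ on $C(\kappa_0)$ come from Condition~\ref{condition8} and a further Cauchy estimate. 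The hardest step will be the uniform complex-disc approximation at the sharp rate $o_P(\sqrt n)$: beating $\sqrt n$ on a contour requires using the bandwidth balance $n^{\alpha}h^2\to 0$ and $n^{1-\alpha}h/\log n\to\infty$ delicately, and the estimate must hold uniformly in complex $\beta$ so that Cauchy's formula is actually available to extract the derivatives.
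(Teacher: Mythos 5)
Your overall architecture --- score-equation Taylor expansion, complex-analytic extension of $\log f$ to $C(\kappa_0)$, and Cauchy's integral formula to pass from uniform closeness of $\ell_{nh}$ and the oracle likelihood $\ell_n(\beta)=\sum_i\log f(Y_i-x_i^\top\beta)$ to closeness of their derivatives --- is exactly the paper's. The Hessian half of your reduction is also how the paper proceeds: only the averaged quantity $n^{-1}\ell_{nh}$ needs to converge uniformly on the complex ball, and convergence of analytic functions then gives $n^{-1}\nabla^2\ell_{nh}(\beta^{\ast})\to-\bm{I}(\beta_0)$. The gap is in your treatment of the score. You reduce it to the bound $\sup_{\beta\in C(\kappa)}|\ell_{nh}(\beta)-\ell_n(\beta)|=o_P(\sqrt n)$ and then divide by the contour radius. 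That rate is not attainable under the stated conditions: the deterministic bias of the kernel estimator alone makes $\log f_{nh}-\log f$ of order $h^2/f$ pointwise, so the sum picks up a contribution of order $nh^2$ (up to factors controlled on the event $f\geq n^{-\alpha}$), and Condition \ref{condition2} only guarantees $h^2=o(n^{-\alpha})$ with $\alpha<1/2$, so $nh^2$ may be of order $n^{1-\alpha}\gg\sqrt n$. Likewise the stochastic part $O_p\big(\sqrt{\log n/(nh)}\big)$ of $\sup_z|f_{nh}(z)-f(z)|$ is nowhere near $o(n^{-1/2})$ after multiplying by $n$ and dividing by $f\geq n^{-\alpha}$. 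So the key lemma your proof rests on would fail, or would at least require undersmoothing and deviation bounds that the paper does not assume; flagging it as ``the hardest step'' does not repair this.

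The paper's proof is structured precisely to avoid needing that rate. Its Lemma \ref{lemma3} establishes only that the averaged difference is $o_P(1)$, which it recasts multiplicatively as $\sum_i\log f_{nh}=\big(\sum_i\log f\big)\big(1+u_n(\beta)+{\rm i}v_n(\beta)\big)$ with $u_n+{\rm i}v_n=o_p(1)$ uniformly on $C(\kappa_0)$ (using that $|n^{-1}\sum_i\log f|$ is bounded away from zero there). It then writes $T_{1n}-S_{1n}$ as the contour integral of $\sum_i\log f\cdot(u_n+{\rm i}v_n)/(b_1-b_{10})^2$, takes real parts, collapses the integral to a single contour point by the mean value theorem, and lets the radius $r\to 0$ so that the factor $R_{n0}/r$ is tied to the derivative $S_{1n}$ itself, which is $O_p(1)$ by the CLT under Condition \ref{condition7}. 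That multiplicative-error, shrinking-contour device is what extracts the extra $\sqrt n$ without ever proving a $\sqrt n$-rate of uniform approximation. To complete your argument you would need either to reproduce that device or to impose genuinely stronger bandwidth conditions (e.g.\ $nh^4\to0$ together with a correspondingly sharp uniform deviation bound) so that the additive $o_P(\sqrt n)$ closeness you assume actually holds.
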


\begin{remark}
Under Conditions \ref{condition1}--\ref{condition8}, Theorem \ref{theorem2} tells us that $\widehat{\beta}$ is not only $\sqrt n$ consistent, but also asymptotically efficient.
\end{remark}

\section{Sparse Estimation in High Dimension} \label{sec3}

When the true underlying model has a sparse representation, meaning that most of the true unknown parameters are zero, variable selection is necessary. In this section, we study the penalized likelihood estimator via adaptive LASSO. In other words, we find our sparse estimator by minimizing the penalized likelihood function
\begin{equation}\label{eq7}
-\frac{1}{n} \ell_{nh}(\beta) + \sum_{j=1}^{p}p_{\lambda_{n}}(|b_{j}|).
\end{equation}
where $p_{\lambda}(\bullet)$ is the penalty function, which includes four popular choices: LASSO (\cite{Tibshirani:1996}), adaptive LASSO (\cite{Zou:2006}),  SCAD (\cite{Fan:2001}) and MCP (\cite{Zhang:2010}). In this article, we use adaptive LASSO. So, (\ref{eq7}) can be written as
\begin{equation}\label{eq8}
 -\frac{1}{n} \ell_{nh}(\beta) + \lambda_{n} \sum_{j=1}^{p}\hat{w}_{j}|b_{j}|.
\end{equation}
where $\hat{w}_{j}=1/|\hat{b}_{j}|^{\gamma}, \gamma >0$, $\hat b_j$ is the our distribution regression estimator of the $j$th coordinate of $\beta$ without any penalty.

To solve the minimization problem in (\ref{eq8}), we implement iterative coordinate descent algorithm combined with local quadratic approximation.
 Given $b_{1}^{(k)},\dots, b_{p}^{(k)}$, we can find
\begin{equation}\label{eq9}
b_{j}^{(k+1)} = \arg\min_{b_{j}}\Big{\{}    -\frac{1}{n} \ell_{nh}(b_{j}) + \lambda_{n} \frac{1}{2[b_{j}^{(k)}]^{2}}b_{j}^2   \Big{\}}
\end{equation}

An iterative coordinate descent algorithm is as follow.
\begin{itemize}
  \item[Step 1] Input an initial value $\beta^{(0)}$
  \item[ Step 2] For $k\geq 0$
  \begin{itemize}
    \item[Step 2.1] For $j\in {1,\dots,p}$,  update $b_{j}^{(k+1)}$ by (\ref{eq9})
    \item[Step 2.2] Repeat Step 2.1, until $b_{j}^{(k+1)}$ converges.
  \end{itemize}
  \item[Step 3]  Update $k \leftarrow k+1$. Repeat  Step 2 until convergence. Then output $\beta^{(k+1)}$.
\end{itemize}

The selection of the tuning parameter $\lambda_n$ is important in real applications. Cross-validation is a very common method, however it is time-consuming when $p$ is large. In the article, we use the generalized information criterion (GIC) for high dimensional penalized likelihood (\cite{Fan:2013, Lin:2014}). The GIC is defined by
$$
GIC(\lambda_{n})= \log \widehat{\sigma}^{2} + s_{\lambda_{n}} \frac{\log \log n}{n} \log(p \vee n)
$$
where $\widehat{\sigma}^{2}=\sum_{i=1}^n(Y_i-x_i^\top\widehat{\beta}_{\lambda_{n}})^{2} / n$, $\widehat{\beta}_{\lambda_{n}}$ is the estimator which minimizes \eqref{eq8}, $p\vee n=\max(p,n) $, and $s_{\lambda_{n}}$ is the number of nonzero coefficients in $\widehat{\beta}_{\lambda_{n}}$. So $\lambda_n$ is selected as
$$\hat{\lambda}_{n}=\arg\min_{\lambda_{n}}GIC(\lambda_{n}).$$

Now we study the large sample properties of  the proposed penalized likelihood estimators. Let $x_{i}=(x_{i1}^\top, x_{i2}^\top )^\top$, $x_{i1}\in R^{s}, x_{i2}\in R^{p-s},$ $\beta=(\beta_{1}^\top, \beta_{2}^\top )^\top$. The true nonzero coefficients are $\beta_{10}$, and  the zero coefficients are $\beta_{20}$. The corresponding estimator is $\widehat{\beta}_{0}=(\widehat{\beta}_{10}^\top, \widehat{\beta}_{20}^\top )^\top$. We rewrite (\ref{eq8}) as
\begin{equation}\label{eq10}
Q(\beta)= - \ell_{nh}(\beta) + n\lambda_{n} \sum_{j=1}^{p}\hat{w}_{j}|b_{j}|.
\end{equation}

\begin{theorem}[Oracle properties]
\label{theorem3}
In addition to the conditions in Theorem \ref{theorem2}, we assume that  $\liminf_{n\to \infty}\lambda_{n}\sqrt{n}>0$. If $p\to \infty$ and $s$ is fixed, we further assume that $p=O(\lambda_n n^{1/2})$, $p=o(n^{1-\alpha}h/\log n)$ as $n\to \infty$, $\sup_{p\geq 1} \|\bm{I}(\beta_0)\|<\infty$, and $\sup_{\beta\in C(\kappa_0),p\geq 1} E_{\beta_0}\exp\big(\psi|\log(Y-x^\top\beta)|\big)<\infty$ for some positive $\psi$. Then we have

(a) Sparsity: $\widehat{\beta}_{20}=0$ with probability tending to one as $n\to \infty$.

(b) Asymptotic normality: $\sqrt{n}(\widehat{\beta}_{10} -\beta_{10}) \xrightarrow{d} N(0,[\bm{I}_{11}(\beta_{0})]^{-1})$ , where
$$
\bm{I}_{11}(\beta_{0})=E_{\beta_0}\Big{\{} \Big{[} \frac{\partial}{\partial \beta_1 }\log f(Y-x^\top\beta_0) \Big{]} \Big{[}   \frac{\partial}{\partial \beta_1 }\log f(Y-x^\top\beta_0) \Big{]}^\top \Big{\}}.
$$
\end{theorem}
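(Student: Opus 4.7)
The plan follows the Fan--Li and Zou oracle-property scheme, substituting the nonparametric log-likelihood $\ell_{nh}$ for a parametric one and borrowing the asymptotic handling of $\ell_{nh}$ from the proof of Theorem \ref{theorem2}. Let $\mathcal{A}=\{j:b_{j0}\neq 0\}$, so $|\mathcal{A}|=s$. First, I would invoke the $\sqrt n$-consistency of the unpenalized distribution regression estimator $\tilde\beta$ (Theorem \ref{theorem2}, extended to diverging $p$ under the added exponential-moment hypothesis). This yields the dichotomy of adaptive weights: for $j\in\mathcal{A}$, $\hat w_j=|b_{j0}|^{-\gamma}(1+o_p(1))$, while for $j\notin\mathcal{A}$, $\hat w_j\gtrsim n^{\gamma/2}$ with probability tending to one.

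Next, I would prove the existence of a local minimizer $\hat\beta$ of $Q$ with $\|\hat\beta-\beta_0\|=O_p(\sqrt{p/n})$. By the standard device it suffices to show that for every $\epsilon>0$ there exists $C$ with
\[
P\Bigl\{\inf_{\|u\|=C} Q\bigl(\beta_0+\sqrt{p/n}\,u\bigr) > Q(\beta_0)\Bigr\}\ge 1-\epsilon.
\]
Taylor-expanding $-\ell_{nh}$ at $\beta_0$, the linear term is $-\sqrt{p/n}\,u^\top\nabla\ell_{nh}(\beta_0)$, bounded via the CLT-type control of $n^{-1/2}\nabla\ell_{nh}(\beta_0)$ already used for Theorem \ref{theorem2}; the quadratic term is of order $p\,u^\top \bm{I}(\beta_0) u$, which dominates once $C$ is large by $\sup_{p\ge 1}\|\bm{I}(\beta_0)\|<\infty$ combined with the positive-definiteness of $\bm{I}(\beta_0)$. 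The active-set penalty contributes at most $\lambda_n s\sqrt{p/n}\,\|u\|$, absorbed by the quadratic piece under $p=O(\lambda_n n^{1/2})$, and the inactive-set penalty only reinforces the inequality.

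For part (a), I would apply a subgradient argument at any such local minimizer. For $j\notin\mathcal{A}$, a one-step Taylor expansion yields $|n^{-1}\partial_{b_j}\ell_{nh}(\hat\beta)|=O_p(\sqrt{p/n})$, whereas the penalty subdifferential at a hypothetical nonzero $\hat b_j$ has magnitude at least $\lambda_n\hat w_j\gtrsim \lambda_n n^{\gamma/2}$. The ratio $\lambda_n n^{\gamma/2}/\sqrt{p/n}$ diverges under the stated rate assumptions, so the stationarity condition forces $\hat b_j=0$. Hence $\hat\beta_{20}=0$ with probability tending to one.

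For part (b), I would condition on the sparsity event and analyze the first-order condition restricted to $\beta_1\in R^s$. Expanding
\[
0=n^{-1}\nabla_{\beta_1}\ell_{nh}(\beta_{10},0)+n^{-1}\nabla^2_{\beta_1\beta_1}\ell_{nh}(\beta^\ast)(\hat\beta_{10}-\beta_{10})+\lambda_n\,\hat w_{1:s}\circ \sgn(\hat\beta_{10}),
\]
the Hessian term converges to $-\bm{I}_{11}(\beta_0)$ by the LLN arguments of Theorem \ref{theorem2} restricted to $\mathcal{A}$; the penalty term on $\mathcal{A}$ is $o_p(n^{-1/2})$ because $\hat w_j$ is bounded there and the effective tuning on the active set is of smaller order than $n^{-1/2}$; and $n^{-1/2}\nabla_{\beta_1}\ell_{nh}(\beta_{10},0)$ is asymptotically $N(0,\bm{I}_{11}(\beta_0))$ by the CLT already proved for the full score. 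Slutsky delivers the stated limit. The main obstacle, exactly as for Theorem \ref{theorem2}, is that $\ell_{nh}$ is \emph{not} an i.i.d.\ sum, so the score CLT, the Hessian LLN, and the uniform Taylor-remainder bounds cannot be quoted off the shelf; they must be re-derived via the U-statistic plus Cauchy-residue decomposition of $\ell_{nh}$, and in the diverging-$p$ regime one must additionally track $p$ against the bandwidth through Conditions \ref{condition2} and \ref{condition6}.
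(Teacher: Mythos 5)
Your proposal is correct in substance and reaches the same conclusions, but it takes a recognizably different route for part (b). The paper works with the normalized objective $Z_n(\alpha)=Q(\beta_0+\alpha/\sqrt n)-Q(\beta_0)$, shows $Z_n(\alpha)\xrightarrow{d}Z(\alpha)$ where $Z$ is the usual quadratic on the active directions and $+\infty$ otherwise, and then invokes the epi-convergence result of Geyer (1994) to transfer convergence of minimizers; you instead prove existence of a $\sqrt{p/n}$-consistent local minimizer by the inf-over-a-sphere device and then derive the limit law from the stationarity equations restricted to the active set plus Slutsky. Both are standard and both lean on exactly the same nontrivial inputs, namely the score CLT and Hessian LLN for $\ell_{nh}$ established in the proof of Theorem \ref{theorem2} via the Cauchy-residue decomposition (you correctly identify that these cannot be quoted off the shelf because $\ell_{nh}$ is not an i.i.d.\ sum); your sparsity argument via subgradient domination is essentially identical to the paper's sign-of-the-derivative argument in a neighborhood $|b_j|<\eta_n$. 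The epi-convergence route spares the paper your separate existence step, while your first-order-condition route makes the role of each term more explicit. One caveat: your assertion that the active-set penalty contribution is $o_p(n^{-1/2})$ ``because the effective tuning on the active set is of smaller order than $n^{-1/2}$'' presupposes $\lambda_n\sqrt n\to 0$, which sits uneasily with the stated hypothesis $\liminf_n\lambda_n\sqrt n>0$; the paper's own proof records this term only as $O(\lambda_n n^{1/2})$ before dropping it, so this is a tension inherited from the theorem's conditions rather than a defect peculiar to your argument, but you should state explicitly what rate for $\lambda_n$ you are using rather than asserting negligibility.
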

Theorem \ref{theorem3} states that the distribution regression with adaptive LASSO estimator have oracle properties in the sparse linear models with diverging number of parameters.

\section{Simulation and application}\label{sec4}
\subsection{Simulation}
In this section, we evaluate the finite sample performance of our proposed distribution regression  with two existing methods (mean regression and quantile (median) regression) through
two sets of simulation examples.  We set $n$ to be 100, and repeat 200 times. $x=(x_{1},\dots,x_{p})\sim N(0,E)$, where  $E$ is a $p\times p$ identity matrix. The error term $\varepsilon$ is generated from seven distributions: (1) the mixed normal distribution: $0.5N(-2, 5^2)+0.5N(2, 0.5^2)$; (2) the normal distribution $N(0,1)$; (3) $t(3)$;  (4) the standard Laplace distribution: $Lap(0,1)$; (5) the Gamma distribution $G(2,2)$; (6) the standard Cauchy distribution $Cau(0,1)$; (7) the exponential distribution: $Exp(1)$. Among these, $t(3)$ and $Cau(0,1)$ are symmetric heavy-tailed distributions; the mixed normal distribution is multimodal; $G(2,2)$ and $Exp(1)$ are asymmetric.

\begin{example}
\label{example1}
In this example, we consider three linear models:
\begin{itemize}
  \item[] Case 1: $Y= 3x_{1}+3x_{2}+3x_{3}+\varepsilon $
  \item[] Case 2: $  Y= (3-Q_{1})x_{1}+(3-Q_{2})x_{2}+(3-Q_{3})x_{3}+\varepsilon $
  \item[] Case 3:  $  Y= (3-Q_{1})x_{1}+\dots+(3-Q_{6})x_{6}+\varepsilon$
\end{itemize}
where $Q_{j}$ is generated from a uniform distribution on $[0,1]$. We evaluate the performance through the following two criteria:
\begin{enumerate}[(1)]
  \item $bias_{j}=\frac{1}{200}\sum_{m=1}^{200}|\hat{b}_{j}^{(m)}-b_{j}|,~~~j=1\dots,p$
  \item $MSE =\frac{1}{200p}\sum_{m=1}^{200}\sum_{j=1}^{p}(\hat{b}_{j}^{(m)}-b_{j})^2$.
\end{enumerate}
Here $\hat{b}_{j}^{(m)}$ is the estimator of $b_j$ based on the $m$th sample.
\end{example}

Tables \ref{Tab1} and \ref{Tab2} summarize the simulation results. We can draw the following conclusions:
\begin{enumerate}
  \item When the error term follows the normal distribution, the mean regression (MR) performs the best and the distribution regression (DR) has almost the same performance as the quantile regression (QR). However, the difference of their performance is not big. When the error term follows other distributions, the mean regression becomes worse as the population is moving away from the normal.
  \item When the error follows the standard Cauchy distribution, the quantile regression is the best.
  \item When the error term is asymmetric or heavy-tailed or multimodal, the performance of  our distribution regression is the best and the performance of mean regression is the worst.
\end{enumerate}

\begin{table*}
\centering
\caption{The simulation results of Cases 1 and 2.}
\label{Tab1}
\begin{tabular}{crrr rrrc}
\hline
 	&	     &\multicolumn{3}{c}{Case 1} & \multicolumn{3}{c}{Case 2 } \\
	&		&	DR	&	MR	&	QR	&	DR	&	MR	&	QR	\\\hline
Mixed Norm	&	$bias_{1}$	&	0.1271	&	0.3331	&	0.5126	&	0.1335	&	0.3649	&	0.5293	\\
	&	$bias_2$	&0.1210  &	0.3365	&	0.5126	&	0.1267	&	0.3119	&	0.4695	\\
	&	$bias_3$	&	0.1262	&	0.3139	&	0.5135	&	0.1256	&	0.3502	&	0.5270	\\
	&	$MSE$	&	0.0375	&	0.1696	&	0.3696	&	0.0424	&	0.1853	&	0.3682	\\
$N(0,1)$	&	$bias_{1}$	&	0.1056	&	0.0812	&	0.1040	&	0.1032	&	0.0775	&	0.1004	\\
	&	$bias_2$	&	0.1026	&	0.0789	&	0.1054	&	0.0995	&	0.0757	&	0.0997	\\
	&	$bias_3$	&	0.1059	&	0.0824	&	0.1035	&	0.1079	&	0.0814	&	0.1019	\\
	&	$MSE$	&	0.0171	&	0.0103	&	0.0167	&	0.0169	&	0.0097	&	0.0159	\\
t(3)	&	$bias_{1}$	&	0.1370	&	0.1332	&	0.1093	&	0.1397	&	0.1322	&	0.1155	\\
	&	$bias_2$	&	0.1445	&	0.1314	&	0.1163	&	0.1428	&	0.1350	&	0.1123	\\
	&	$bias_3$	&	0.1358	&	0.1438	&	0.1117	&	0.1444	&	0.1433	&	0.1159	\\
	&	$MSE$	&	0.0310	&	0.0309	&	0.0203	&	0.0318	&	0.0391	&	0.0208	\\
$Lap(0,1)$ 	&	$bias_{1}$	&	0.1209	&	0.1191	&	0.0921	&	0.1125	&	0.1133	&	0.0964	\\
	&	$bias_2$	&	0.1168	&	0.1118	&	0.0941	&	0.1216	&	0.1100	&	0.0956	\\
	&	$bias_3$	&	0.1152	&	0.1151	&	0.0955	&	0.1141	&	0.1124	&	0.0950	\\
	&	$MSE$	&	0.0223	&	0.2093	&	0.0142	&	0.0219	&	0.0203	&	0.0147	\\
$G(2,2)$	&	$bias_{1}$	&	0.0520	&	0.0943	&	0.1533	&	0.0520	&	0.0914	&	0.1601	\\
	&	$bias_2$	&	0.0532	&	0.0995	&	0.1600	&	0.0509	&	0.0982	&	0.1548	\\
	&	$bias_3$	&	0.0545	&	0.1033	&	0.1648	&	0.0528	&	0.0968	&	0.1567	\\
	&	$MSE$	&	0.0050	&	0.0154	&	0.0366	&	0.0048	&	0.0142	&	0.0358	\\
$Cau(0,1)$	&	$bias_{1}$	&	1.2789	&	3.3662	&	0.1300	&	1.1963	&	3.8744	&	0.1270	\\
	&	$bias_2$	&	0.8949	&	14.561	&	0.1410	&	1.3502	&	4.0455	&	0.1349	\\
	&	$bias_3$	&	0.9317	&	3.4846	&	0.1363	&	1.2723	&	4.3668	&	0.1429	\\
	&	$MSE$	&	52.938	&	24068	&	0.0301	&	69.329	&	310.40	&	0.0304	\\
$Exp(1)$	&	$bias_{1}$	&	0.0460	&	0.1118	&	0.0885	&	0.0455	&	0.1140	&	0.0907	\\
	&	$bias_2$	&	0.0475	&	0.1180	&	0.0949	&	0.0437	&	0.1196	&	0.0895	\\
	&	$bias_3$	&	0.0489	&	0.1145	&	0.0908	&	0.0472	&	0.1100	&	0.0890	\\
	&	$MSE$	&	0.0050	&	0.0213	&	0.0136	&	0.0044	&	0.0202	&	0.0127	\\\hline
\end{tabular}
\end{table*}

\begin{table*}
\centering
\caption{The simulation results of Case 3.}
\label{Tab2}
\begin{tabular}{crrr rrrr c}
\hline
 	&		&	$bias_1$	&	$bias_2$	&	$bias_3$	&	$bias_4$	&	$bias_5$	&	$bias_6$	&	$MSE$	\\\hline
Mixed Norm	&	DR	&	0.1742	&	0.1632	&	0.1689	&	0.1551	&	0.1643	&	0.1519	&	0.0602	\\
	&	MR	&	0.3446	&	0.3461	&	0.3339	&	0.3196	&	0.3268	&	0.3312	&	0.1790	\\
	&	QR	&	0.4622	&	0.4456	&	0.4369	&	0.4307	&	0.4465	&	0.4503	&	0.3016	\\
$N(0,1)$	&	DR	&	0.1021	&	0.1074	&	0.1097	&	0.1086	&	0.1027	&	0.1081	&	0.0182	\\
	&	MR	&	0.0803	&	0.0819	&	0.0860	&	0.0818	&	0.0809	&	0.0819	&	0.0151	\\
	&	QR	&	0.0987	&	0.0993	&	0.1095	&	0.1060	&	0.1056	&	0.1079	&	0.0171	\\
$t(3)$	&	DR	&	0.1406	&	0.1335	&	0.1382	&	0.1350	&	0.1336	&	0.1300	&	0.0296	\\
	&	MR	&	0.1405	&	0.1300	&	0.1340	&	0.1356	&	0.1383	&	0.1387	&	0.0309	\\
	&	QR	&	0.1159	&	0.1150	&	0.1156	&	0.1163	&	0.1172	&	0.1167	&	0.0216	\\
$Lap(0,1)$&	DR	&	0.1217	&	0.1160	&	0.1280	&	0.1202	&	0.1188	&	0.1212	&	0.0235	\\
	&	MR	&	0.1101	&	0.1185	&	0.1181	&	0.1137	&	0.1154	&	0.1161	&	0.0214	\\
	&	QR	&	0.0964	&	0.0993	&	0.1037	&	0.0948	&	0.0997	&	0.0962	&	0.0159	\\
$G(2,2)$	&	DR	&	0.0621	&	0.0560	&	0.0606	&	0.0633	&	0.0612	&	0.0595	&	0.0065	\\
	&	MR	&	0.0969	&	0.0977	&	0.1042	&	0.1039	&	0.1016	&	0.1008	&	0.0161	\\
	&	QR	&	0.1445	&	0.1420	&	0.1476	&	0.1511	&	0.1441	&	0.1404	&	0.0323	\\
$Cau(0,1)$&	DR	&	2.1296	&	1.6145	&	2.0975	&	1.6555	&	1.6499	&	3.2405	&	269.04	\\
	&	MR	&	5.5058	&	3.1172	&	4.0740	&	3.8904	&	3.4932	&	4.4697	&	555.88	\\
	&	QR	&	0.1404	&	0.1429	&	0.1415	&	0.1482	&	0.1522	&	0.1470	&	0.0346	\\
$Exp(1)$	&	DR	&	0.0643	&	0.0607	&	0.0565	&	0.0591	&	0.0631	&	0.0605	&	0.0075	\\
	&	MR	&	0.1196	&	0.1130	&	0.1117	&	0.1188	&	0.1242	&	0.1144	&	0.0220	\\
	&	QR	&	0.0994	&	0.0973	&	0.0934	&	0.0989	&	0.1004	&	0.0956	&	0.0153	\\\hline
\end{tabular}
\end{table*}

\begin{example}
\label{example2}
 In this example, we study the performance of variable selection. Consider the following linear model:
$$
Y=3x_{1}+1.5x_{2}+2x_{3} + 0x_{4}+\dots +0x_{p}+\varepsilon
$$
We set the dimension $p$ as 10 and 50.  Besides $bias$, we also use the following criteria to evaluate the performance:
\begin{enumerate}[(1)]

  \item Correct-fit: the proportion of including all three significant variables.
  \item Over-fit: the proportion of including all three significant variables and some zero coefficients.
  \item Under-fit: the proportion of excluding any significant variables.
  \item  $ME$ = $(\widehat{\beta}- \beta_{0})^\top E(xx^\top)(\widehat{\beta}- \beta_{0})$
  \item C: the average number of nonzero coefficients correctly estimated to be nonzero.
  \item IC: the average number of zero coefficient incorrectly estimated to be nonzero.
\end{enumerate}
\end{example}

Tables \ref{Tab3} and \ref{Tab4} summarize the simulation results. Under the criterion $ME$, the mean regression performs well in the case of $N(0,1)$ , and the quantile regression performs well in cases of $t(3), Lap(0,1) $ and $Cau(0,1)$. However, for the other distributions, the distribution regression is superior to the mean regression and quantile regression, especially for mixed normal distribution and $Exp(1)$.  Under the criterion of variable selection, distribution regression outperforms mean and quantile regression.

\begin{table*}
\centering
\caption{The simulation results of Example \ref{example2} with $p=10$.}
\label{Tab3}
\begin{tabular}{crrr rrrr rrc}
\hline
	&		&	$bias_{1}$	&	$bias_{2}$	&	$bias_{3}$	&	Correct	&	Over	&	Under	&	ME	&		C	&	IC	\\\hline
Mixed Norm	&	DR	&	0.121	&	0.188	&	0.143	&	0.945	&	0.000	&	0.055	&	0.279	&		2.925	&	0.000	\\
	&	MR	&	0.367	&	0.426	&	0.392	&	0.450	&	0.500	&	0.050	&	1.092	&		2.945	&	1.090	\\
	&	QR	&	0.168	&	0.176	&	0.176	&	0.450	&	0.545	&	0.005	&	0.313	&		2.995	&	0.900	\\
$N(0,1)$	&	DR	&	0.111	&	0.104	&	0.112	&	0.960	&	0.040	&	0.000	&	0.055	&	3.000	&	0.045	\\
	&	MR	&	0.085	&	0.082	&	0.088	&	0.865	&	0.135	&	0.000	&	0.038	&	3.000	&	0.220	\\
	&	QR	&	0.103	&	0.094	&	0.101	&	0.745	&	0.255	&	0.000	&	0.056	&	3.000	&	0.285	\\
$t(3)$	&	DR	&	0.137	&	0.161	&	0.135	&	0.965	&	0.035	&	0.000	&	0.098	&	3.000	&	0.035	\\
	&	MR	&	0.144	&	0.147	&	0.148	&	0.690	&	0.310	&	0.000	&	0.138	&	3.000	&	0.505	\\
	&	QR	&	0.114	&	0.113	&	0.110	&	0.645	&	0.355	&	0.000	&	0.079	&	3.000	&	0.435	\\
$Lap(0,1)$	&	DR	&	0.121	&	0.128	&	0.118	&	0.980	&	0.020	&	0.000	&	0.070	&	3.000	&	0.020	\\
	&	MR	&	0.120	&	0.121	&	0.116	&	0.755	&	0.245	&	0.000	&	0.087	&	3.000	&	0.535	\\
	&	QR	&	0.100	&	0.096	&	0.095	&	0.760	&	0.240	&	0.000	&	0.055	&	3.000	&	0.270	\\
$G(2,2)$	&	DR	&	0.049	&	0.050	&	0.046	&	0.995	&	0.005	&	0.000	&	0.011	&	3.000	&	0.005	\\
	&	MR	&	0.065	&	0.062	&	0.058	&	0.975	&	0.025	&	0.000	&	0.018	&	3.000	&	0.070	\\
	&	QR	&	0.070	&	0.068	&	0.062	&	0.950	&	0.050	&	0.000	&	0.020	&	3.000	&	0.060	\\
$Cau(0,1)$	&	DR	&	1.389	&	0.975	&	1.161	&	0.410	&	0.000	&	0.590	&	7.082	&	1.495	&	0.000	\\
	&	MR	&	1.678	&	0.113	&	1.353	&	0.135	&	0.225	&	0.640	&	10.482	&	1.555	&	0.790	\\
	&	QR	&	0.138	&	0.141	&	0.146	&	0.560	&	0.440	&	0.000	&	0.137	&	3.000	&	0.575	\\
$Exp(1)$	&	DR	&	0.033	&	0.042	&	0.039	&	1.000	&	0.000	&	0.000	&	0.007	&	3.000	&	0.000	\\
	&	MR	&	0.076	&	0.092	&	0.083	&	0.830	&	0.170	&	0.000	&	0.037	&	3.000	&	0.250	\\
	&	QR	&	0.074	&	0.075	&	0.079	&	0.805	&	0.095	&	0.000	&	0.028	&	3.000	&	0.110	\\\hline
\end{tabular}
\end{table*}

\begin{table*}
\centering
\caption{The simulation results of Example \ref{example2} with $p=50$.}
\label{Tab4}
\begin{tabular}{crrr rrrr rrc}
\hline
  	&		&	$bias_{1}$	&	$bias_{2}$	&	$bias_{3}$	&	Correct	&	Over	&	Under	&	ME	&	C	&	IC	\\\hline
Mixed Norm	&	DR	&	0.093	&	0.125	&	0.108	&	0.990	&	0.010	&	0.000	&	0.056	&	3.000	&	0.010	\\
	&	MR	&	0.160	&	0.231	&	0.184	&	0.625	&	0.325	&	0.000	&	0.226	&	3.000	&	0.695	\\
	&	QR	&	0.123	&	0.130	&	0.132	&	0.125	&	0.875	&	0.000	&	0.256	&	3.000	&	2.335	\\
$N(0,1)$	&	DR	&	0.188	&	0.354	&	0.290	&	0.795	&	0.010	&	0.195	&	0.810	&	2.755	&	0.015	\\
	&	MR	&	0.404	&	0.529	&	0.509	&	0.275	&	0.590	&	0.135	&	1.608	&	2.850	&	1.845	\\
	&	QR	&	0.226	&	0.248	&	0.238	&	0.004	&	0.955	&	0.005	&	1.377	&	2.995	&	4.760	\\
$t(3)$	&	DR	&	0.150	&	0.156	&	0.155	&	0.975	&	0.025	&	0.000	&	0.111	&	3.000	&	0.030	\\
	&	MR	&	0.152	&	0.165	&	0.165	&	0.630	&	0.365	&	0.005	&	0.183	&	2.990	&	0.725	\\
	&	QR	&	0.122	&	0.119	&	0.117	&	0.150	&	0.850	&	0.000	&	0.146	&	3.000	&	1.820	\\
$Lap(0,1)$	&	DR	&	0.120	&	0.158	&	0.139	&	0.965	&	0.035	&	0.000	&	0.094	&	3.000	&	0.045	\\
	&	MR	&	0.114	&	0.144	&	0.124	&	0.745	&	0.255	&	0.000	&	0.106	&	3.000	&	0.750	\\
	&	QR	&	0.102	&	0.098	&	0.095	&	0.245	&	0.755	&	0.000	&	0.096	&	3.000	&	1.550	\\
$G(2,2)$	&	DR	&	0.042	&	0.045	&	0.046	&	1.000	&	0.000	&	0.000	&	0.009	&	3.000	&	0.000	\\
	&	MR	&	0.061	&	0.059	&	0.058	&	0.880	&	0.120	&	0.000	&	0.021	&	3.000	&	0.325	\\
	&	QR	&	0.060	&	0.059	&	0.059	&	0.730	&	0.270	&	0.000	&	0.023	&	3.000	&	0.355	\\
$Cau(0,1)$	&	DR	&	1.644	&	1.154	&	1.312	&	0.260	&	0.005	&	0.735	&	8.641	&	1.170	&	0.010	\\
	&	MR	&	1.670	&	1.233	&	1.334	&	0.070	&	0.180	&	0.750	&	9.784	&	1.315	&	0.990	\\
	&	QR	&	0.153	&	0.156	&	0.141	&	0.065	&	0.935	&	0.000	&	0.359	&	3.000	&	2.910	\\
$Exp(1)$	&	DR	&	0.044	&	0.043	&	0.038	&	1.000	&	0.000	&	0.000	&	0.009	&	3.000	&	0.000	\\
	&	MR	&	0.090	&	0.082	&	0.083	&	0.745	&	0.255	&	0.000	&	0.050	&	3.000	&	0.685	\\
	&	QR	&	0.082	&	0.078	&	0.076	&	0.550	&	0.450	&	0.000	&	0.042	&	3.000	&	0.640	\\\hline
\end{tabular}
\end{table*}

\subsection{Real Data Analysis}
In the section, we first illustrate the usefulness of the proposed method in the criterion of prediction error via three real data sets:  airmay, aircraft and eyedata.
The data set airmay is used to study the relationship between ozone concentration and three explanatory variables: radiation level, wind speed, and temperature.  It was collected from May 1 to September 30 in 1973. 
The aircraft is used to  analyse the relation between cost and four explanatory variables: aspect ratio, lift-to-drag ratio, weight  and thrust, and deals with 23 single-engine
aircraft built over the years 1947-1979. 
The eyedata is used to find the genes whose expressions are associated with  the expression level of TRIM32 gene (\cite{Scheetz:2006}). The sample size is 120 and the number of  predictors is 200 (120 rats with 200 gene probes).

 The missing values of airmay are removed. The airmay  and aircraft can be obtained in  R package ``robustbase".  The third data can be found in R package ``flare". Since the sample size is less than  the number of the predictors in  eyedata, we use adaptive LASSO to select variables. In order to eliminate  the effect of scale, we have scaled and centered each of the explanatory variables of airmay and  aircraft such that they have mean 0 and variance 1. We compare the performance of our distribution regression with the mean regression and quantile regression via the following criteria.
\begin{enumerate}[(1)]
  \item We divide the data into the training set and test set. Test set consists of the last $\lceil n/9 \rceil$ numbers of the data set, where $\lceil a \rceil $ denotes the smallest integer not less than $a$.
  \item We use the prediction error, $PE=\lceil n/9 \rceil ^{-1}\sum_{i=n-\lceil n/9 \rceil+ 1}^{n}(Y_{i}-\widehat{Y}_{i})^{2} $ to evaluate their performance, where $\widehat{Y}_{i}$ is the estimated $Y_i$.
\end{enumerate}

Table \ref{Tab5} summarizes the results. The distribution regression performs better than the other two because the PE of distribution regression is smaller than the corresponding mean and quantile regression ones. In fact, the empirical densities of these data sets are either asymmetrical or multimodal. This observation is consistent with our simulation results. Furthermore, for the eyedata, the distribution regression, mean regression and quantile (median) regression select different variables, which are $(x_{153}, x_{180} )$, $(x_{87}, x_{153})$  and $(x_{153}, x_{181})$  respectively.
\begin{table*}[!h]
\centering
\caption{The results of  real data.}
\label{Tab5}
\begin{tabular}{crrr rrrc}
\hline
data	&	$n$	&	$p$	&	DR	&	MR	&	QR	\\\hline
airmay	&	24	&	3	&	3.8747	&	4.8488	&	5.2551	\\
aircraft	&	23	&	4	&	0.8087	&	1.1792	&	0.9416	\\
eyedata	&	120	&	200	&	0.0093	&	0.0120	&	0.0118	\\\hline
\end{tabular}
\end{table*}

We then study Coleman data in diagnosis. This data contains information on 20 schools from the Mid-Atlantic and New England States, drawn from a population studied by \cite{Coleman:1966}. Following \cite{Rousseeuw:1987}, the response and explanatory variables are as follows:
\begin{itemize}
\setlength{\itemsep}{0pt}
\setlength{\parsep}{0pt}
\setlength{\parskip}{0pt}
  \item[]$Y$: verbal mean test score (all sixth graders).
  \item[]$x_{1}$: staff salaries per pupil.
  \item[]$x_{2}$: percent of white-collar fathers.
  \item[]$x_{3}$: socioeconomic status composite deviation: means for family size, family intactness, father's education, mother's education, and home items.
  \item[]$x_{4}$: mean teacher's verbal test score.
  \item[]$x_{5}$: mean mother's educational level, one unit is equal to two school years.
\end{itemize}

Table \ref{Tab6} shows the estimated coefficients of $x_{1},\dots,x_{4}$ are not much different for all three methods. However, the DR estimate of $x_{5}$ is positive, while the others are negative. We believe the former estimate is more reasonable because the higher the mother's educational level is,  the higher test score should be in a general way. This indicates that DR makes more sense in this particular data. Moreover, Figure \ref{fig1}  shows that the response $Y$ and three regression residuals seems multimodal distributed.

\begin{table*}[!h]
\centering
\caption{The estimated coefficients of  coleman  data} 
\label{Tab6}
\begin{tabular}{crrr rrrc}
\hline
	&	DR	&	MR	&	QR	\\
Intercept	&	0.023	&	19.949	&	29.162	\\
$x_{1}$	&	-2.014	&	-1.793	&	-1.734	\\
$x_{2}$	&	0.012	&	0.044	&	0.059	\\
$x_{3}$	&	0.505	&	0.556	&	0.675	\\
$x_{4}$	&	1.475	&	1.110	&	1.124	\\
$x_{5}$	&	0.240	&	-1.811	&	-3.551	\\\hline
\end{tabular}
\end{table*}

\begin{figure}[!h]
  \centering
  \includegraphics[width=100mm]{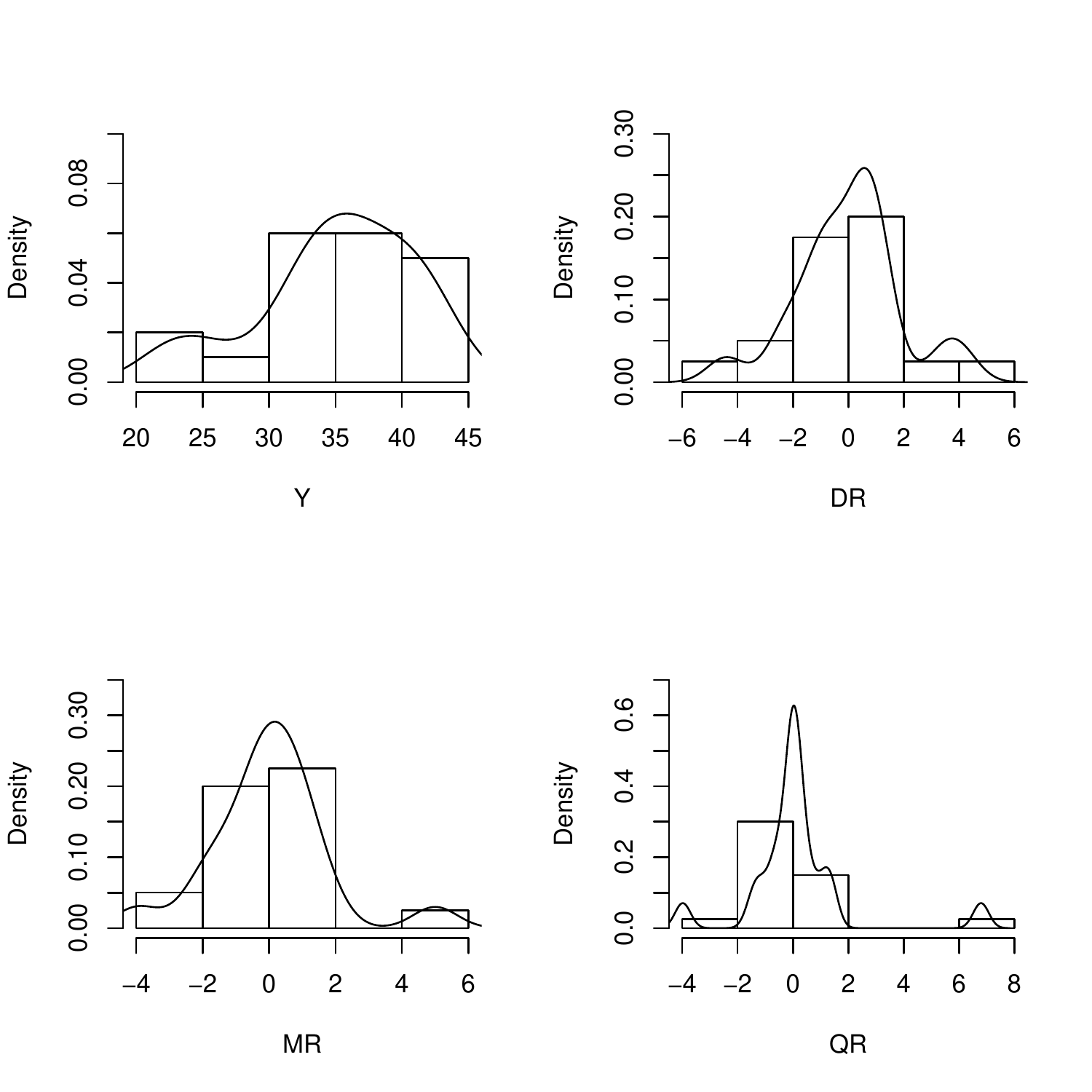}
  \caption{The density estimate of the response and the three regression residuals}\label{fig1}
\end{figure}


\appendix
\renewcommand{\theequation}{A.\arabic{equation}}
\setcounter{equation}{0}
\section*{Appendix section}\label{app}

To prove Theorem \ref{theorem1}, we need three lemmas as a first step.

\begin{lemma}
\label{lemma1}
Under Conditions \ref{condition4} and \ref{condition5}, if $p$ is fixed,
\begin{equation} \label{A0}
\sup_{\beta\in B(\kappa_0)}|n^{-1}\sum_{i=1}^n \log f(Y_i-x_i^\top\beta)-E_{\beta_0}\log f(Y-x^\top\beta)| \to 0 \ \ a.s.
\end{equation}
as $n\to \infty$; if $p=o(n^{1-\alpha}h/\log n)$ as $n\to \infty$, and $\sup_{\beta\in B(\kappa_0),p\geq 1} E_{\beta_0}\exp\big(\psi$ $|\log(Y-x^\top\beta)|\big)<\infty$ for some positive $\psi$, \eqref{A0} is still correct.
\end{lemma}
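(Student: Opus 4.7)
The plan is to combine a standard chaining argument (covering $B(\kappa_0)$ by a finite $\delta$-net, pointwise LLN at the net points, plus a uniform Lipschitz bound controlling the oscillation in between) with sub-exponential concentration when $p$ is allowed to diverge. The key observation that makes everything work is a Lipschitz bound inherited from Condition \ref{condition5}: by the chain rule,
$$\frac{\partial}{\partial b_j}\log f(Y-x^\top\beta)=-x_j\,\frac{\partial}{\partial Y}\log f(Y-x^\top\beta),$$
so the mean value theorem yields $|\log f(Y-x^\top\beta)-\log f(Y-x^\top\beta')|\le M(U)\|x\|\cdot\|\beta-\beta'\|$ for all $\beta,\beta'\in B(\kappa_0)$. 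Condition \ref{condition5} controls $E_{\beta_0}M(U)\|x\|$ (of order $\sqrt p$), and the empirical average $n^{-1}\sum_{i=1}^n M(U_i)\|x_i\|$ concentrates around it by the usual LLN, with third-moment control coming from the stronger hypothesis $E(M(U)\|x/\sqrt p\|)^3<\infty$ in the diverging-$p$ case.

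\textbf{Fixed $p$.} Given $\delta>0$ pick a finite $\delta$-net $\{\beta^{(1)},\dots,\beta^{(N_\delta)}\}$ of $B(\kappa_0)$. Condition \ref{condition4} guarantees $E_{\beta_0}|\log f(Y-x^\top\beta^{(k)})|<\infty$, so Kolmogorov's SLLN gives $n^{-1}\sum_{i=1}^n\log f(Y_i-x_i^\top\beta^{(k)})\to E_{\beta_0}\log f(Y-x^\top\beta^{(k)})$ almost surely, simultaneously for the finitely many $k$'s. The Lipschitz bound dominates the empirical oscillation inside each $\delta$-ball by $\delta\cdot n^{-1}\sum_i M(U_i)\|x_i\|=O(\delta)$ a.s., and the population oscillation by $\delta\cdot E_{\beta_0}M(U)\|x\|=O(\delta)$. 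Sending $n\to\infty$ first and then $\delta\downarrow 0$ yields \eqref{A0}.

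\textbf{Diverging $p$.} The exponential moment hypothesis makes $\log f(Y-x^\top\beta)$ sub-exponential uniformly in $\beta\in B(\kappa_0)$ and in $p$, so Bernstein's inequality gives, for each fixed $\beta$ and small $\eta>0$,
$$P_{\beta_0}\!\left(\Bigl|n^{-1}\sum_{i=1}^n\log f(Y_i-x_i^\top\beta)-E_{\beta_0}\log f(Y-x^\top\beta)\Bigr|>\eta\right)\le 2\exp(-cn\eta^2).$$
Take a $\delta_n$-net of $B(\kappa_0)\subset\mathbb{R}^p$ of cardinality $N_n\le(C\kappa_0/\delta_n)^p$; a union bound shows that the maximal deviation at the net points exceeds $\eta_n$ with probability at most $2(C\kappa_0/\delta_n)^p e^{-cn\eta_n^2}$. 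The growth restriction $p=o(n^{1-\alpha}h/\log n)$ allows the choice of $\delta_n\to 0$, $\eta_n\to 0$ with $p\log(1/\delta_n)=o(n\eta_n^2)$, so this probability is negligible. The oscillation within each $\delta_n$-ball is bounded by $\delta_n\cdot n^{-1}\sum_iM(U_i)\|x_i\|$, which is $O_P(\delta_n\sqrt p)$ using the third-moment bound in Condition \ref{condition5}; choosing $\delta_n\sqrt p\to 0$ (still compatible with the above constraints since $p\ll n$) makes this $o_P(1)$. The population oscillation $\sup_{\|\beta-\beta'\|\le\delta_n}|E_{\beta_0}\log f(Y-x^\top\beta)-E_{\beta_0}\log f(Y-x^\top\beta')|$ is handled identically using $E_{\beta_0}M(U)\|x\|=O(\sqrt p)$.

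\textbf{Main obstacle.} The delicate point is the diverging-$p$ chaining: the $\delta$-net has metric entropy $p\log(1/\delta)$, so the Bernstein bound at each net point must beat this complexity, while \emph{simultaneously} $\delta\sqrt p$ must be driven to zero to tame the Lipschitz oscillation. Both demands compete — decreasing $\delta$ shrinks the oscillation but inflates the union bound — and they can be reconciled only thanks to the dimension cap $p=o(n^{1-\alpha}h/\log n)$ and the strengthened cubic-moment condition on $M(U)\|x/\sqrt p\|$. Pointwise integrability alone (Condition \ref{condition4}) is far from sufficient; the $\exp(\psi|\log f|)$ assumption is exactly what upgrades the pointwise LLN to a sub-exponential concentration strong enough to survive the $p$-dependent union bound.
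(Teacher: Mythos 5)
Your proposal is correct and follows essentially the same route as the paper: a finite net of $B(\kappa_0)$, the pointwise SLLN (fixed $p$) or an exponential concentration bound plus union bound over the net (diverging $p$), and the Lipschitz control of the oscillation via $M(U)\|x\|$ from Condition \ref{condition5}, with the empirical average of $M(U_i)\|x_i/\sqrt p\|$ controlled by its moment bounds. The only cosmetic difference is that the paper scales the net mesh as $\epsilon/\sqrt p$ so the oscillation matches $E_{\beta_0}\big(M(U)\|x/\sqrt p\|\big)\epsilon$ directly, and invokes a large-deviation bound where you invoke Bernstein's inequality; these are the same argument.
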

\begin{proof}
Let $p$ be fixed. For any $\epsilon>0$, let $\{\beta_1,\cdots,\beta_M\}$ be an $\epsilon/\sqrt p$-net of $B(\kappa_0)$, where $M=([2\kappa_0\sqrt p/\epsilon+1])^p$. In other words,  for any $\beta\in B(\kappa_0)$, there exists one $\beta_j$ such that $\|\beta-\beta_j\|\leq \epsilon/\sqrt p$. Since $M$ is fixed, it follows from the strong law of large numbers that
\begin{equation} \label{A0-1}
\max_{1\leq j\leq M}|n^{-1}\sum_{i=1}^n \log f(Y_i-x_i^\top \beta_j)-E_{\beta_0}\log f(Y-x^\top \beta_j)| \to 0 \ \ a.s.
\end{equation}
as $n\to \infty$. For any $\beta\in B(\kappa_0)$, supposing $\|\beta- \beta_1\|\leq \epsilon/\sqrt p$, we have 
\begin{eqnarray*}
&&n^{-1}|\sum_{i=1}^n \log f(Y_i-x_i^\top\beta)-\sum_{i=1}^n \log f(Y_i-x_i^\top \beta_1)| \\
&&\leq n^{-1}\sum_{i=1}^n M(U_i)\|x_i/\sqrt p\|\epsilon\to E_{\beta_0}\big(M(U_1)\|x_1/\sqrt p\|\big) \epsilon \ \ a.s.,\\
&&|E_{\beta_0}\log f(Y-x^\top\beta)-E_{\beta_0}\log f(Y-x^\top \beta_1)|\leq E_{\beta_0}\big(M(U_1)\|x_1/\sqrt p\|\big) \epsilon.
\end{eqnarray*}
So we can complete the proof.

Now let $p=o(n^{1-\alpha}h/\log n)$ as $n\to \infty$. Since $\sup_{\beta\in B(\kappa_0)}E_{\beta_0}\exp\big(\psi|\log f(Y-x^\top\beta)|\big)<\infty$, the large deviation principle shows that
for any $\delta>0$, there exists $\epsilon>0$, independent of $n$ such that
$$
\max_{1\leq j\leq M}P\Big(|n^{-1}\sum_{i=1}^n \log f(Y_i-x_i^\top \beta_j)-E_{\beta_0}\log f(Y-x^\top \beta_j)|\geq \delta\Big) \leq e^{-n\epsilon}.
$$
Hence \eqref{A0-1} is still correct. By Condition \ref{condition5} and the Borel-Cantelli lemma,
$$n^{-1}\sum_{i=1}^n M(U_i)\|x_i/\sqrt p\| - E_{\beta_0}\big(M(U_1)\|x_1/\sqrt p\|\big) \to 0 \ a.s.$$
as $n\to \infty$. So the proof for the fixed $p$ case can be adapted to this divergent one. We can complete the proof.
\end{proof}

\begin{lemma}
\label{lemma2}
 Under Conditions \ref{condition1}--\ref{condition4}, if $p=o(n^{1-\alpha}h/\log n)$ as $n\to \infty$, for any $\delta >0$, we have
$$
P\Big{(} \sup_{\beta\in B(\kappa_0)}\Big{|} n^{-1}\sum_{i=1}^n \log f_{nh}(Y_i-x_i^\top\beta)-n^{-1}\sum_{i=1}^n \log f(Y_i-x_i^\top\beta)  \Big{|}  \geq \delta  \Big{)}  \rightarrow 0
$$
as $n\to \infty$.
\end{lemma}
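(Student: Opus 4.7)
}
The plan is to reduce the claim to a uniform relative-error bound on $f_{nh}$ against $f$ at the observed residuals: control the denominator via Condition \ref{condition3}, control the numerator via Conditions \ref{condition1}--\ref{condition2}, and handle the supremum over $\beta$ by a covering argument calibrated to the dimension bound $p=o(n^{1-\alpha}h/\log n)$.

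I would first apply Condition \ref{condition3} twice. A union bound on the $n^{-k}$ half produces an event of probability $1-o(1)$ on which $f(Y_i-x_i^\top\beta)>n^{-k}$ for every $i\leq n$ and every $\beta\in B(\kappa_0)$; since $f_{nh}\geq n^{-1000}$ by construction, both $|\log f(Y_i-x_i^\top\beta)|$ and $|\log f_{nh}(Y_i-x_i^\top\beta)|$ are then at most $O(\log n)$ on this event. Next, call an index $i$ atypical if $f(Y_i-x_i^\top\beta)\leq n^{-\alpha}$ for some $\beta\in B(\kappa_0)$. The $n^{-\alpha}$ half of Condition \ref{condition3} makes each index atypical with probability $o(1/\log n)$, so a Chernoff bound applied to the iid Bernoulli indicators gives that the number of atypical indices is $o_p(n/\log n)$, and their contribution to $n^{-1}\sum_i|\log f_{nh}-\log f|$ is bounded by $o_p(n/\log n)\cdot O(\log n)/n=o_p(1)$.

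On typical indices one has $f(Y_i-x_i^\top\beta)>n^{-\alpha}$, so the inequality $|\log(a/b)|\leq 2|a/b-1|$ applies whenever $|f_{nh}/f-1|\leq 1/2$. The pointwise variance of $f_{nh}(u)$ is $O(f(u)/(nh))$ by Condition \ref{condition1} and its bias is $O(h^2\|f''\|_\infty)$ by Condition \ref{condition3}; after dividing by $f(u)\geq n^{-\alpha}$, Condition \ref{condition2} yields $O(1/(n^{1-\alpha}h))=o(1/\log n)$ for the variance contribution and $O(n^\alpha h^2)=o(1)$ for the bias. A standard chaining argument over the high-density set $\{u:f(u)\geq n^{-\alpha}\}$ then promotes this to a uniform-in-$u$ relative-error bound of order $o_p(1)$.

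To pass from pointwise-in-$\beta$ to the supremum, I would cover $B(\kappa_0)$ by an $\eta_n$-net of cardinality $(C/\eta_n)^p$. Condition \ref{condition1} makes $\beta\mapsto f_{nh}(Y_i-x_i^\top\beta)$ Lipschitz with constant of order $(\|K'\|_\infty/h^2)(\|x_i\|+n^{-1}\sum_j\|x_j\|)$, and Condition \ref{condition3} gives the analogous Lipschitz bound $\|f'\|_\infty\|x_i\|$ for $f$; Condition \ref{condition4} makes both averages $O_p(p)$. Choosing $\eta_n$ polynomially small so that $p\log(1/\eta_n)=o(n^{1-\alpha}h)$---allowed by the assumed scaling of $p$---lets an exponential concentration inequality handle the maximum over the net, while the Lipschitz transfer controls the excursion between grid points. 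The main obstacle is precisely this combined balance between the three rates in Condition \ref{condition2}, the $n^{-\alpha}$ lower bound from Condition \ref{condition3}, and the dimension $p$: the KDE relative-error bound is just sharp enough to beat $n^{-\alpha}$ on typical indices, the Chernoff concentration of atypical counts demands the $o(1/\log n)$ probability rate, and the net in $\beta$ forces the stated restriction on $p$; everything else is standard kernel-density bookkeeping.
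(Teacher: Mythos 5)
Your proposal follows essentially the same route as the paper's proof: the identical three-regime split on $\min_{\beta\in B(\kappa_0)}f(Y_i-x_i^\top\beta)$ (below $n^{-k}$, between $n^{-k}$ and $n^{-\alpha}$, above $n^{-\alpha}$), the $O(\log n)$ crude bound coming from $f_{nh}\geq n^{-1000}$, concentration of the count of atypical indices (the paper uses Chebyshev where you invoke Chernoff, an immaterial difference), and a polynomially fine net over $B(\kappa_0)$ with Bernstein's inequality and a Lipschitz transfer via $\|K'\|_\infty h^{-2}$ and $\|f'\|_\infty$, with the rate condition on $p$ absorbing the net cardinality. The approach and the balance of rates you identify are exactly those of the paper.
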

\begin{proof}
We write $n^{-1}\sum_{i=1}^n \log f_{nh}(Y_i-x_i^\top\beta)-n^{-1}\sum_{i=1}^n \log f(Y_i-x_i^\top\beta)=A_1+A_2+A_3$, where
\begin{eqnarray*}
A_1&=&n^{-1}\sum_{i=1}^n \big(\log f_{nh}(Y_i-x_i^\top\beta)- \log f(Y_i-x_i^\top\beta)\big)\\
&& \qquad \qquad\qquad\qquad \times I(\min_{\beta\in B(\kappa_0)}f(Y_i-x_i^\top\beta)\leq n^{-k}),\\
A_2&=&n^{-1}\sum_{i=1}^n \big(\log f_{nh}(Y_i-x_i^\top\beta)- \log f(Y_i-x_i^\top\beta)\big)\\
&& \qquad \qquad\qquad\qquad \times I(n^{-k}<\min_{\beta\in B(\kappa_0)}f(Y_i-x_i^\top\beta)\leq n^{-\alpha}),\\
A_3&=&n^{-1}\sum_{i=1}^n \big(\log f_{nh}(Y_i-x_i^\top\beta)- \log f(Y_i-x_i^\top\beta)\big)\\
&&\qquad \qquad\qquad\qquad \times I(\min_{\beta\in B(\kappa_0)}f(Y_i-x_i^\top\beta)\geq n^{-\alpha}).
\end{eqnarray*}
For $A_1$, we have,
\begin{eqnarray}
P( \sup_{\beta\in B(\kappa_0)}|A_1|>\delta/3) &\leq& P(\min_{1\leq i\leq n}\min_{\beta\in B(\kappa_0)}f(Y_i-x_i^\top\beta)
\leq n^{-k})\nonumber\\
&\leq& nP(\min_{\beta\in B(\kappa_0)}f(Y-x^\top\beta)\leq n^{-k})\to 0 \label{A1}
\end{eqnarray}
by Condition \ref{condition3} as $n\to \infty$.
The fact that
$n^{-1000}\leq f_{nh}(\varepsilon_{i})\leq n^{-1000}+h^{-1}\|K\|_\infty$ implies
\begin{eqnarray}
&&\sup_{\beta\in B(\kappa_0)}|A_2|\leq \Big((1001+k+|\log(\|K\|_\infty)|)\log n \Big) \times \nonumber \\
&&\qquad \qquad\qquad n^{-1}\sum_{i=1}^n I(n^{-k}<\min_{\beta\in B(\kappa_0)}f(Y_i-x_i^\top\beta)\leq n^{-\alpha}). \label{A2-1}
\end{eqnarray}
By Condition \ref{condition3},
$$
EI(n^{-k}<\min_{\beta\in B(\kappa_0)}f(Y_i-x_i^\top\beta)\leq n^{-\alpha})=o(\log^{-1}n)
$$
as $n\to \infty$. Hence it follows from Chebyshev's inequality that
$$
n^{-1}\sum_{i=1}^n I(n^{-k}<\min_{\beta\in B(\kappa_0)}f(Y_i-x_i^\top\beta)\leq n^{-\alpha})=o_p(\log^{-1}n))
$$
as $n\to\infty$. This combined with \eqref{A2-1} gives
\begin{equation} \label{A2}
P(\sup_{\beta\in B(\kappa_0)}|A_2|>\delta/3) \to 0
\end{equation}
as $n\to \infty$. Finally we handle $A_3$. We will prove
\begin{equation} \label{A3}
P\Big{(}   \max_{1\leq i \leq n}\sup_{\beta\in B(\kappa_0)} \Big{|}  \frac{f_{nh}(Y_i-x_i^\top\beta))}{f(Y_i-x_i^\top\beta))}  -1 \Big{|}I(\min_{\beta\in B(\kappa_0)}f(Y_i-x_i^\top\beta)\geq n^{-\alpha}) > \delta/3 \Big{)} \to 0
\end{equation}
as $n\to \infty$, which immediately gives $P(\sup_{\beta\in B(\kappa_0)}|A_3|>\delta/3) \to 0$ as $n\to \infty$.  In fact it suffices to prove
\begin{equation} \label{A3-1}
nP\Big{(}  \sup_{\beta\in B(\kappa_0)} \Big{|}  \frac{f_{nh}(Y_1-x_1^\top\beta))}{f(Y_1-x_1^\top\beta))}  -1 \Big{|}I(\min_{\beta\in B(\kappa_0)}f(Y_1-x_1^\top\beta)\geq n^{-\alpha}) > \delta/3 \Big{)} \to 0
\end{equation}
as $n\to \infty$. In order to prove \eqref{A3-1}, we first prove that
 \begin{equation} \label{A3-2}
n\sum_{j=1}^N P\Big{(} \Big{|}  \frac{f_{nh}(Y_1-x_1^\top\beta_j)}{f(Y_1-x_1^\top\beta_j)}  -1 \Big{|}I(\min_{\beta\in B(\kappa)}f(Y_1-x_1^\top\beta)\geq n^{-\alpha}) > \delta/3 \Big{)} \to 0
\end{equation}
as $n\to \infty$, where $\{\beta_1,\cdots,\beta_N\}$
 is the $n^{-c_0}$-net of $B(\kappa)$, and $N=\big([2\kappa_0 n^{c_0}+1]\big)^p$. Here $c_0$ is a constant which will be specified later. In other words, for each $\beta\in B(\kappa_0)$, there exists one $\beta_j$ ($1\leq j\leq N$) such that $\|\beta-\beta_j\|\leq n^{-c_0}$. Bernstein's inequality and Condition \ref{condition2} imply that
$$P\Big{(} \Big{|}  \frac{f_{nh}(Y_1-x_1^\top \beta_j)}{f(Y_1-x_1^\top \beta_j)}  -1 \Big{|}I(\min_{\beta\in B(\kappa_0)}f(Y_1-x_1^\top\beta)\geq n^{-\alpha}) > \delta/3 \Big{)}\leq \exp(-cn^{1-\alpha}h),
$$
where $c$ is a constant independent of $n$ and $j$. £¨Here we used the condition that $f'(z)$ and $f''(z)$ exist a.e. and have finite $L_\infty$ norms.) This combined with $p=o(n^{1-\alpha}h/\log n)$ directly gives \eqref{A3-2}. Now for any $\beta\in B(\kappa_0)$, we suppose $\|\beta-\beta_1\|\leq n^{-c_0}$. So
\begin{eqnarray*}
|f_{nh}(Y_1-x_1^\top\beta)-f_{nh}(Y_1-x_1^\top \beta_1)|\leq \|K'\|_\infty \|x_1\|h^{-2}n^{-c_0},\\
|f(Y_1-x_1^\top\beta)-f(Y_1-x_1^\top \beta_1)|\leq \|f'\|_\infty \|x_1\|n^{-c_0}.
\end{eqnarray*}
Noting that $E\|x_1/p\|<\infty$, we can select $c_0$ which is large enough and fixed so that $\|x_1\|h^{-2}n^{-c_0}=o_p(n^{-\alpha})$ and $\|x_1\|n^{-c_0}=o_p(n^{-\alpha})$.
The above two inequalities together with \eqref{A3-2} give \eqref{A3-1}.  Combining \eqref{A1}, \eqref{A2}, \eqref{A3} and \eqref{A3-1} completes the proof.
\end{proof}

\begin{lemma}
\label{lemma3}
 Under Conditions \ref{condition1}--\ref{condition8}, for any $\delta >0$, if $p$ is fixed, we have
\begin{eqnarray*}
&&\sup_{\beta\in C(\kappa_0)}|n^{-1}\sum_{i=1}^n \log f(Y_i-x_i^\top\beta)-E_{\beta_0}\log f(Y-x^\top\beta)| \to 0 \ \ a.s., \\
&&P\Big{(} \sup_{\beta\in C(\kappa_0)}\Big{|} n^{-1}\sum_{i=1}^n \log f_{nh}(Y_i-x_i^\top\beta)-n^{-1}\sum_{i=1}^n \log f(Y_i-x_i^\top\beta)  \Big{|}  \geq \delta  \Big{)}  \rightarrow 0
\end{eqnarray*}
as $n\to \infty$. If $p=o(n^{1-\alpha}h/\log n)$ as $n\to \infty$, and $\sup_{\beta\in C(\kappa_0),p\geq 1} E_{\beta_0}\exp\big(\psi$ $|\log(Y-x^\top\beta)|\big)<\infty$ for some positive $\psi$ and $\kappa$, the above first relation is still correct. If $p=o(n^{1-\alpha}h/\log n)$ as $n\to \infty$, the above second relation is still correct. (In this lemma, we don't require analyticity in $C(\kappa_0)$.)
\end{lemma}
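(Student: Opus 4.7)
\textbf{Plan for the proof of Lemma \ref{lemma3}.} The strategy is to replay the proofs of Lemmas \ref{lemma1} and \ref{lemma2} with the real ball $B(\kappa_0)$ replaced by the complex ball $C(\kappa_0)$. Three structural observations make the transfer routine: $C(\kappa_0)$ is a convex subset of ${\bf C}^p$ of real dimension $2p$, so an $\epsilon$-net still has polynomial size $(C_0/\epsilon)^{2p}$; Conditions \ref{condition6}--\ref{condition8} are the pointwise complex-valued analogues of Conditions \ref{condition3}--\ref{condition5}, upgrading the small-density thresholds, the derivative bounds $\|f'\|_\infty,\|f''\|_\infty$, and the integrability of the dominating function $M(U)$ to hold uniformly over $C(\kappa_0)$; and convexity of $C(\kappa_0)$ permits line-segment estimates that yield a Lipschitz control of $\log f(Y-x^\top\beta)$ in $\beta$.

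For the first relation, I would select an $\epsilon/\sqrt{p}$-net $\{\beta_1,\dots,\beta_M\}$ of $C(\kappa_0)$ with $M\leq (C_0\sqrt{p}/\epsilon)^{2p}$. At each $\beta_j$ the strong law (when $p$ is fixed) or a Bernstein-type large-deviation estimate (when $p$ diverges, using the exponential moment hypothesis) gives
\bd
\max_{1\leq j\leq M}\Big|n^{-1}\sum_{i=1}^n\log f(Y_i-x_i^\top\beta_j)-E_{\beta_0}\log f(Y-x^\top\beta_j)\Big|\to 0.
\ed
To pass from the grid to an arbitrary $\beta\in C(\kappa_0)$, I parameterize the segment $\beta_j+t(\beta-\beta_j)$ for $t\in[0,1]$, which stays in $C(\kappa_0)$ by convexity, and integrate $\frac{d}{dt}\log f(Y-x^\top(\beta_j+t(\beta-\beta_j)))$ via the chain rule. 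Condition \ref{condition8} yields $|\log f(Y-x^\top\beta)-\log f(Y-x^\top\beta_j)|\leq M(U)\|x\|\,\|\beta-\beta_j\|$; averaging and using $E(M(U)\|x/\sqrt{p}\|)<\infty$ produces the Lipschitz control of both the empirical average and its expectation, which closes the argument just as in Lemma \ref{lemma1}.

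For the second relation, I decompose $n^{-1}\sum_i\log f_{nh}(Y_i-x_i^\top\beta)-n^{-1}\sum_i\log f(Y_i-x_i^\top\beta)=A_1+A_2+A_3$ exactly as in Lemma \ref{lemma2}, but with the indicators written in terms of the complex modulus $\|f(Y_i-x_i^\top\beta)\|$ and with the minimum taken over $\beta\in C(\kappa_0)$. Condition \ref{condition6} supplies the same vanishing probabilities for the threshold events, so the bounds for $A_1$ and $A_2$ carry over unchanged. For $A_3$ I build an $n^{-c_0}$-net of size $(C_0 n^{c_0})^{2p}$, apply Bernstein's inequality to the real and imaginary parts of the complex-valued average $\frac{1}{nh}\sum_i K\big(\frac{(Y_1-Y_i)-(x_1-x_i)^\top\beta_j}{h}\big)$ to obtain $\|f_{nh}/f-1\|\leq\delta/3$ at each grid point with probability $1-O(\exp(-cn^{1-\alpha}h))$, and then extend to the whole complex ball using Lipschitz bounds on $f_{nh}$ (via $\|K'\|_\infty$ in Condition \ref{condition1}) and on $f$ (via $\|f'\|_\infty$ in Condition \ref{condition6}), choosing $c_0$ large enough that the grid error is $o_p(n^{-\alpha})$. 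The net size is absorbed by $p=o(n^{1-\alpha}h/\log n)$.

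The main obstacle is the concentration step inside $A_3$. Because $\beta$ is now complex, $K\big(\frac{(Y_j-Y_i)-(x_j-x_i)^\top\beta}{h}\big)$ takes complex values, so one must verify that the analytic extension of $K$ has sub-exponential tails (allowing Bernstein to apply separately to real and imaginary parts) and that the mean of the resulting complex sum is close in modulus to $f(Y-x^\top\beta)$ uniformly in $\beta\in C(\kappa_0)$. Once this complex-variable kernel concentration is in place, the remaining arguments are direct rewrites of the real-ball proofs of Lemmas \ref{lemma1} and \ref{lemma2}.
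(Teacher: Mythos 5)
Your proposal follows essentially the same route as the paper's proof: replay Lemmas \ref{lemma1} and \ref{lemma2} over the complex ball with a net of cardinality $\big([2\kappa_0 n^{c_0}+1]\big)^{2p}$, and in the concentration step for $A_3$ apply Bernstein's inequality separately to the real and imaginary parts of $f_{nh}(Y_1-x_1^\top\beta_j)-f(Y_1-x_1^\top\beta_j)$, then extend to all of $C(\kappa_0)$ by Lipschitz bounds. The one point you raise that the paper leaves implicit --- that the complex extension of $K$ must remain bounded on the relevant complex strip for Bernstein's inequality to apply --- is a legitimate observation, but it does not change the argument's structure.
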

\begin{proof}
The proof is similar to those of Lemmas \ref{lemma1} and \ref{lemma2}. The only difference is that in \eqref{A3-2}, which becomes
 \begin{equation} \label{C1}
n\sum_{j=1}^M P\Big{(} \Big{|}  \frac{f_{nh}(Y_1-x_1^\top\beta_j)}{f(Y_1-x_1^\top\beta_j)}  -1 \Big{|}I(\min_{\beta\in C(\kappa)}\|f(Y_1-x_1^\top\beta)\|\geq n^{-\alpha}) > \delta/3 \Big{)} \to 0
\end{equation}
as $n\to \infty$, where $\{\beta_1,\cdots,\beta_M\}$
 is the $n^{-c_0}$-net of $C(\kappa_0)$, and $M=\big([2\kappa_0 n^{c_0}+1]\big)^{2p}$. Here $c_0$ is a constant which will be specified later. In other words, for each $\beta\in C(\kappa_0)$, there exists one $\beta_j$ ($1\leq j\leq M$) such that $\|\beta-\beta_j\|\leq n^{-c_0}$. Berstein's inequality and Condition \ref{condition2} imply that
\begin{eqnarray*}
&&P\Big{(} \Big{|} \Re \big(f_{nh}(Y_1-x_1^\top \beta_j)-f(Y_1-x_1^\top \beta_j)\big) \Big{|}> \\
&&\qquad \frac 16\delta \|f(Y_1-x_1^\top\beta_j)\|
\big{|}
\|f(Y_1-x_1^\top\beta)\|\geq n^{-\alpha}  \Big{)}\leq \exp(-cn^{1-\alpha}h),\\
&&P\Big{(} \Big{|} \Im \big(f_{nh}(Y_1-x_1^\top \beta_j)-f(Y_1-x_1^\top \beta_j)\big) \Big{|}> \\
&&\qquad \frac 16\delta \|f(Y_1-x_1^\top\beta_j)\|
\big{|}
\|f(Y_1-x_1^\top\beta)\| \geq n^{-\alpha} \Big{)}\leq \exp(-cn^{1-\alpha}h)
\end{eqnarray*}
where $c$ is a constant independent of $n$ and $j$. The above implies \eqref{C1}.
\end{proof}

\begin{lemma}
\label{lemma4}
Let $(T_{1n},\dots,T_{pn})$ be a sequence of random vectors tending weakly to $(T_{1},\dots,T_{p})$ and suppose that for each fixed $j$  and $k$, $A_{jkn}$ is a sequence of random variables tending in probability to constants $a_{jk}$ for which the matrix $A=( a_{jk} )$ is nonsingular. Let $B=( b_{jk} )=A^{-1}$. Then if the distribution of $(T_{1},\dots,T_{p})$  has a density with respect to Lebesgue measure over $R_{p}$, the solutions $(V_{1n},\dots,V_{pn})$ of
\begin{equation}\label{eq150}
\sum_{k=1}^{p} A_{jkn}V_{kn} =T_{jn}  ~~j=1,\dots, p
\end{equation}
tend in probability to the solutions $(V_{1},\dots,V_{p})$  of
$$
\sum_{k=1}^{p} a_{jk}V_{k} =T_{j}  ~~j=1,\dots, p
$$
given by

$$
V_{j}=\sum_{k=1}^{p}b_{jk} T_{k}.
$$
\end{lemma}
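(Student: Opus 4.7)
The plan is to solve the linear system explicitly on the event that the coefficient matrix is invertible, and then pass to the limit using the continuous mapping theorem combined with Slutsky's lemma.

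First I would establish that $P(A_n \text{ is invertible}) \to 1$, where $A_n = (A_{jkn})$. Since the determinant is a polynomial, hence continuous, function of the matrix entries, the coordinatewise in-probability convergence $A_{jkn} \to a_{jk}$ yields $\det(A_n) \to \det(A) \neq 0$ in probability by the continuous mapping theorem. On the event $\{\det(A_n) \neq 0\}$, the system (\ref{eq150}) has the unique solution $V_n = A_n^{-1} T_n$. Matrix inversion is continuous at the nonsingular matrix $A$, since each entry of $M^{-1}$ is a rational function of the entries of $M$ whose denominator is $\det(M)$; another application of the continuous mapping theorem therefore gives $A_n^{-1} \to B$ in probability.

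Next I would combine the two different modes of convergence. Because $A_n^{-1}$ tends in probability to the deterministic matrix $B$ while $T_n$ converges weakly to $T$, Slutsky's lemma yields the joint weak convergence $(A_n^{-1}, T_n) \Rightarrow (B, T)$. The map $(M, t) \mapsto Mt$ from $R^{p \times p} \times R^p$ into $R^p$ is continuous everywhere, so the continuous mapping theorem produces
$$V_n = A_n^{-1} T_n \Rightarrow B T = V, \qquad V_j = \sum_{k=1}^{p} b_{jk} T_k.$$

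The main obstacle, and the step I would handle with greatest care, is reconciling the word ``in probability'' in the conclusion with the fact that only weak convergence of $T_n$ is assumed, since the limiting $V$ is itself random. The natural reading is that the $T_n$ and $T$ live on a single probability space on which $T_n \to T$ in probability; then the decomposition
$$V_n - V = A_n^{-1}(T_n - T) + (A_n^{-1} - B)\,T$$
makes everything routine: the first summand is $o_P(1)$ because $\{A_n^{-1}\}$ is tight (as a sequence convergent in probability) and $T_n - T \to 0$ in probability, while the second summand is $o_P(1)$ because $A_n^{-1} - B \to 0$ in probability and $T$ is a fixed random vector. If only distributional convergence is available, the Skorokhod representation theorem lets one pass to an almost-sure version on a common space and repeat the same argument. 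The density assumption on $(T_1,\dots,T_p)$ enters only to guarantee that $V = BT$ has a density on $R^p$ (which it does, since $B$ is nonsingular), ensuring the limit is a nondegenerate random vector suitable for the subsequent asymptotic-normality arguments.
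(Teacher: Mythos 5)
The paper gives no proof of this lemma at all---it simply cites Lehmann and Casella (1998)---and your argument is correct and is essentially the standard proof found in that reference: restrict to the high-probability event that $A_n=(A_{jkn})$ is nonsingular, use continuity of the determinant and of matrix inversion to get $A_n^{-1}\to B$ in probability, and conclude via Slutsky's lemma and the continuous mapping theorem that $V_n=A_n^{-1}T_n$ converges to $BT$. Your explicit handling of the tension between the stated ``tend in probability'' and the merely-weak convergence of $T_n$ (via a common probability space or Skorokhod representation) is exactly the right point to flag, since the limit $V=BT$ is random.
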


 The proof of Lemma \ref{lemma4} can be found in \cite{Lehmann:1998}.

\begin{proof}[ Proof of Theorem~\ref{theorem1}]

\begin{align*}
  &\frac{1}{n}\sum_{i=1}^{n} \log f_{nh}(Y_i-x_i^\top \beta) \\
  & =   \frac{1}{n}\sum_{i=1}^{n} \Big{[} \log f_{nh}(Y_i-x_i^\top\beta)-  \log f(Y_i-x_i^\top\beta) +  \log f(Y_i-x_i^\top\beta)  \Big{]}  \\
   &=  \frac{1}{n}\sum_{i=1}^{n} \Big{[} \log f_{nh}(Y_i-x_i^\top\beta)-  \log f(Y_i-x_i^\top\beta)  \Big{]}   + \frac{1}{n}\sum_{i=1}^{n} \log f(Y_i-x_i^\top\beta)\\
   &= D_{1} + D_{2}.
\end{align*}
 By Lemma \ref{lemma2} we obtain that, for any $\kappa\in (0,\kappa_0]$, $\sup_{\beta\in B(\kappa)}|D_1| \rightarrow 0$ in probability as $n\to \infty$.
By Lemma \ref{lemma1}
$$
\sup_{\beta\in B(\kappa)}|D_2-E_{\beta_0}\log f(Y-x^\top\beta)| \to 0 \ \ a.s.
$$
as $n\to \infty$. 
Combining all the above and \eqref{C24-1}, we can see that
for any sufficiently small  $ \kappa >0 $,
$$
P_{\beta_{0}}\big(\sup_{\beta\in \partial B(\kappa)}l_{nh}(\beta ) < l_{nh}(\beta_{0})\big)\to 1
$$
as $n\to \infty$.
So there exists a local maximum point  with probability tending to 1, that is
$$
\frac{\partial \ell_{nh}(\beta)}{\partial \beta} \Big{|}_{\widehat{\beta}}=0.
$$
Moreover,  we have $P_{\beta_{0}}(\|\widehat{\beta}- \beta_{0}\| < \kappa) \rightarrow 1 $.
 This completes the proof.
\end{proof}

\begin{proof}[ Proof of Theorem~\ref{theorem2}]
Let $\beta=(b_1,\cdots,b_p)^\top$, $\beta_0=(b_{10},\cdots,b_{p0})^\top$, where $b_1,\cdots,b_p$ are complex.
Expand $\partial\ell_{nh} (\beta)/\partial b_{j}=\ell'_{nh,j}(\beta)$ about $\beta_{0}$ to obtain
\begin{align*}
  \ell'_{nh,j}(\beta) & =   \ell'_{nh,j}(\beta_{0}) + \sum_{k=1}^{p} (b_{k}-b_{k0})\ell''_{nh,jk}(\beta_{0}) \\
   & + \frac{1}{2}\sum_{k=1}^{p} \sum_{l=1}^{p} (b_{k}-b_{k0})(b_{l}-b_{l0}) \ell'''_{nh,jkl}(\beta^{\ast}),
\end{align*}
where $\ell''_{nh,jk}$  and $\ell'''_{nh,jkl}$ denote the indicated second and third derivatives of $\ell_{nh}$, and $\beta^{\ast}$ is a point on the line segment connecting $\beta$ and $\beta_{0}$. In the expansion, replace $\beta$ by $\widehat{\beta}=(\widehat b_1,\cdots,\widehat b_p)$, which by  Theorem \ref{theorem1}
can be assumed to exist with probability tending to 1 and to be consistent. Hence, $\ell'_{nh,j}(\widehat{\beta})=0 $, and
\begin{align*}
  \sqrt{n} & \sum _{k=1}^{p} (\widehat{b}_{k}-b_{k0})\Big{[}   \frac{1}{n}\ell''_{nh,jk}(\beta_{0}) + \frac{1}{2n}\sum_{l=1}^{p} (\widehat b_{l}-b_{l0}) \ell'''_{nh,jk}(\beta^{\ast}) \Big{]}  \\
  &=- \frac{1}{\sqrt{n}}\ell'_{nh,j}(\beta_{0}).
\end{align*}
This is just (\ref{eq150}) in Lemma \ref{lemma4} with
\begin{eqnarray*}
  V_{kn} &=& \sqrt{n}(\widehat{b}_{k}-b_{k0}) \\
  A_{jkn} &=&  \frac{1}{n}\ell''_{nh,jk}(\beta_{0}) + \frac{1}{2n}\sum_{l=1}^{p} (\widehat{b}_{l}-b_{l0}) \ell'''_{nh,jk}(\beta^{\ast})  \\
  T_{jn} &=& - \frac{1}{\sqrt{n}}\ell'_{nh,j}(\beta_{0}) =-\sqrt{n}\Big{[}  \frac{1}{n}  \sum_{i=1}^{n} \frac{\partial}{\partial b_{j}}\log f_{nh}(Y_i-x_i^\top\beta)\Big{]}_{\beta=\beta_{0}}
\end{eqnarray*}
By Lemma \ref{lemma3}, $n^{-1}l_{nh}$ converges to $E_{\beta_0}\log f(Y-x^\top \beta)$ in probability. Since $n^{-1}l_{nh}$ and $E_{\beta_0}\log f(Y-x^\top \beta)$ are analytic functions of each coordinate of $\beta$ in $C(\kappa_0)$, it follows from convergence of analytic functions and Theorem \ref{theorem1} that
$$
A_{jkn} \rightarrow a_{jk} = \frac{\partial^2}{\partial b_j\partial b_k}E_{\beta_0}\log f(Y-x^\top \beta)= -\bm{I}_{jk}(\beta_{0})
$$
in probability as $n\to \infty$. It remains to show the asymptotic normality of $(T_{1n},\dots, T_{pn})$.  We will only consider $T_{1n}$. The general case can be handled by linear combinations of $T_{1n},\dots, T_{pn}$.

Jensen's inequality shows that $E_{\beta_0} \log f(Y-x^\top \beta_0)<0$. By Lemma \ref{lemma3}, there exists $\delta>0$ such that
$$
\inf_{\beta\in C(\kappa_0)}|n^{-1}\sum_{i=1}^n\log f(Y_i-x_i^\top\beta)|\geq \delta \ \ a.s.
$$
 as $n\to \infty$. Hence by Lemma \ref{lemma3} again, we have
$$
\sum_{i=1}^n \log f_{nh}(Y_i-x_i^\top \beta)=\sum_{i=1}^n \log f(Y_i-x_i^\top \beta)\big(1+u_n(\beta)+{\rm i} v_n(\beta)\big)
$$
 when $\beta\in C(\kappa_0)$, $\rm i^2=-1$, and
 \begin{equation} \label{CLT0}
 u_n(\beta)+{\rm i}v_n(\beta)=o_p(1)
 \end{equation}
  is uniform in $\beta\in C(\kappa_0)$ as $n\to \infty$. Here both $u_n(\beta)$ and $v_n(\beta)$ are real functions of $\beta$.
 Note that Cauchy's residue theorem gives
$$
T_{1n}=\frac{1}{\sqrt n}\frac{1}{2\pi\rm i}\oint_{\mathcal C_1}\frac{\sum_{i=1}^n \log f_{nh}(Y_i-x_i^\top \beta_1)}{(b_1-b_{10})^2}db_1,
$$
where  $\beta_1=(b_1,b_{20},\cdots,b_{p0})$, and $\mathcal C_1$ is the counterclockwise oriented contour $\{b_1\in {\bf C}:\|b_1-b_{10}\| = r\}$ with $r<\kappa_0$. (Here $r$ may depend on the sample $\{(x_i,Y_i):i=1,2,\cdots,n\}$. However we only require that $r$ be positive so that we can apply Cauchy's residue theorem.)  So
\begin{align}
 T_{1n}-S_{1n}
=\frac{1}{\sqrt n}\frac{1}{2\pi\rm i}\oint_{\mathcal C_1}\frac{\sum_{i=1}^n \log f(Y_i-x_i^\top \beta_1)\big(u_n(b_1)+{\rm i} v_n(b_1)\big)}{(b_1-b_{10})^2}db_1. \label{CLT1}
\end{align}
where, $u_n(b_1)=u_n(\beta_1), v_n(b_1)=v_n(\beta_1)$, and
$$
S_{1n}=\frac{1}{\sqrt n}\frac{1}{2\pi\rm i}\oint_{\mathcal C_1}\frac{\sum_{i=1}^n \log f(Y_i-x_i^\top \beta_1)}{(b_1-b_{10})^2}db_1.
$$
Denote the real part and imaginary part of $n^{-1/2}\sum_{i=1}^n \log f(Y_i-x_i^\top \beta_1)$ by $R_n(b_1)$ and $I_n(b_1)$ respectively. Since the left-hand side of \eqref{CLT1} is real, we consider the real part of the right-hand side of \eqref{CLT1}, which is equal to,
\begin{equation}\label{CLT2}
\frac 1{2\pi}\int_0^{2\pi} \frac{\big(R_n\cos\theta+I_n\sin\theta\big)u_n+\big(R_n\sin\theta-I_n\cos\theta\big)v_n}{r} d\theta,
\end{equation}
where $R_n, I_n, u_n, v_n$ are actually $R_n(b_{10}+re^{{\rm i}\theta})$, $I_n(b_{10}+re^{{\rm i}\theta})$, $u_n(b_{10}+re^{{\rm i}\theta})$, $v_n(b_{10}+re^{{\rm i}\theta})$ respectively. By the mean value theorem, we have
\begin{equation}\label{CLT3}
\eqref{CLT2}= \frac{\big(R_{n0}\cos\theta_0+I_{n0}\sin\theta_0\big)u_{n0}+\big(R_{n0}\sin\theta_0-I_{n0}\cos\theta_0\big)v_{n0}}{r}
\end{equation}
where $\theta_0\in[0,2\pi)$,  $R_{n0}=R_n(b_{10}+re^{{\rm i}\theta_0}), I_{n0}=I_n(b_{10}+re^{{\rm i}\theta_0}), u_{n0}=u_n(b_{10}+re^{{\rm i}\theta_0}), v_{n0}=v_n(b_{10}+re^{{\rm i}\theta_0})$.
On the other hand,
\begin{equation}\label{CLT4}
S_{1n}=-n^{-1/2}\sum_{i=1}^n \frac{\partial}{\partial b_1} \log f(Y_i-x_i^\top \beta_0)=-\frac{\partial}{\partial b_1}\big(R_n(b_{10})+{\rm i} I_n(b_{10})\big).
\end{equation}
By \eqref{CLT2}, we also have
$$
|\eqref{CLT1}| \leq 2(|R_{n0}/r|+|I_{n0}/r|)(|u_{n0}|+|v_{n0}|).
$$
Letting $r\to 0$ in the above inequality and noticing \eqref{CLT0}, \eqref{CLT4} and the central limit theorem of $S_{1n}$ (Condition \ref{condition7}), we have
$$
\eqref{CLT1}=o_p(1)
$$
as $n\to \infty$. Again using the central limit theorem of $S_{1n}$, we can complete the proof.
\end{proof}

\begin{proof}[Proof of Theorem~\ref{theorem3}]
Let $\beta=\beta_{0} + \frac{\alpha}{\sqrt{n}}$, where $\alpha=(\alpha_{1}^\top, \alpha_{2}^\top)^\top$, $\alpha_{1}=(a_{1},\dots,a_{s})^\top$ and $\alpha_{2}=(a_{s+1},\dots,a_{p})^\top$. Define $Z_{n}(\alpha)$ by
\begin{equation}\label{eq20}
  Z_{n}(\alpha)= Q\Big{(}  \beta_{0} + \frac{\alpha}{\sqrt{n}}  \Big{)} - Q(  \beta_{0}).
\end{equation}
Note that
\begin{equation}\label{eq21}
   Z_{n}(\alpha) = -\ell_{nh}\Big{(} \beta_{0}+ \frac{\alpha}{\sqrt{n}} \Big{)}  + \ell_{nh}( \beta_{0}) + n\lambda_n \sum_{j=1}^{p}\hat{w}_{j}\Big{[}\Big{|}    b_{j0} + \frac{a_{j}}{\sqrt{n}}\Big{|}  -|b_{j0}| \Big{]}.
\end{equation}
Using the Taylor expansion of $\ell_{nh}( \beta_{0}+ \alpha/\sqrt{n})$ at $\beta_{0}$, we have
\begin{align*}
&-\ell_{nh}\Big{(} \beta_{0}+ \frac{\alpha}{\sqrt{n}} \Big{)}  + \ell_{nh}( \beta_{0})\\
& = -\frac{1}{\sqrt{n}} \alpha^\top\nabla \ell_{nh}(\beta_{0}) - \frac{1}{2n}\alpha^\top\nabla^{2} \ell_{nh}(\beta^{\ast}) \alpha
\end{align*}
where $\beta^{\ast}$ lies between $\beta_{0}$ and $\beta_{0}+ \alpha/\sqrt{n}$.

\noindent
1. {\bf Fixed $p$}.

From the proof of Theorem \ref{theorem2}, we have
\begin{align*}
   &  - \frac{1}{\sqrt{n}}\nabla \ell_{nh}(\beta_{0}) \xrightarrow{d} N(0, \bm{I}(\beta_{0})), \\
& \frac{1}{n}\nabla^{2} \ell_{nh}(\beta^{\ast}) \xrightarrow{p} -\bm{I}(\beta_{0})
\end{align*}
as $n\to \infty$.
Hence
$$
-\ell_{nh}\Big{(} \beta_{0}+ \frac{\alpha}{\sqrt{n}} \Big{)}  + \ell_{nh}( \beta_{0})  \xrightarrow{d} \alpha^\top W + \frac{1}{2}\alpha^\top \bm{I}(\beta_{0}) \alpha,
$$
where $W$ follows $N(0, \bm{I}(\beta_{0}))$.

Now, we consider the limiting behavior of the third term in (\ref{eq21}). If $b_{j0}\neq 0~~(j=1,\dots,s)$, then $\hat{w}_{j} \rightarrow |b_{j0}|^{-\gamma}$, and $\sqrt{n} (|b_{j0} +a_{j}/\sqrt{n}|  -|b_{j0}| ) \rightarrow a_{j} \text{sign}(b_{j0})$. So we have
$$n\lambda_{n} \hat{w}_{j} (|   b_{j0} + a_{j}/\sqrt{n}|  -|b_{j0}| )=O(\lambda_{n}n^{1/2}).$$
If $b_{j0} = 0$ and $a_j\not=0$ for $j=s+1,\dots,p$,
$$n\lambda_{n} \hat{w}_{j} (|   b_{j0} + \alpha_{j}/\sqrt{n}|  -|b_{j0}| )= \lambda_{n}n^{1/2}(|\hat{b}_{j}|)^{-\gamma}|a_{j}|\rightarrow \infty.$$
 Note that $\hat{b}_{j}= o_{p}(1)$.  Hence we get
$$ Z_{n}(\alpha) \xrightarrow{d} Z(\alpha)=\left\{
\begin{array}{cl}
\alpha_{1}^\top W_{1} + \frac{1}{2}\alpha_{1}^\top \bm{I}_{11}(\beta_{0}) \alpha_{1}      &   \text{if} ~~a_{j}=0 ~~\text{for}~~ j=s+1, \dots, p, \\
\infty     & \text{otherwise},
\end{array} \right. $$
where  $W_{1}$ follows $N(0, \bm{I}_{11}(\beta_{0}))$. Since $Z_n(\alpha)$ is convex, and has a unique minimizer $\sqrt{n}(\widehat{\beta}- \beta_{0})$, it follows from the epi-convergence results of Geyer \cite{Geyer:1994} that
$$
\hat{\alpha}_{10} \xrightarrow{d}  \bm{I}_{11}(\beta_{0})^{-1}W_{1}. 
$$
Hence we proved the asymptotic normality part.

\noindent
2. {\bf Divergent $p$}.

Note that
$$\frac{1}{\sqrt{n}} \alpha^\top\nabla \ell_{nh}(\beta_{0})=\frac{1}{\sqrt{n}} \sum_{i=1}^n\sum_{j=1}^p a_j \frac{\partial}{\partial b_j}
\log f_{nh}(Y_i-x_i^\top \beta_0)$$ and
$$
Var \sum_{j=1}^p a_j \frac{\partial}{\partial b_j}
\log f(Y_i-x_i^\top \beta_0)=\alpha^\top \bm{I}(\beta_{0})\alpha<\infty.
$$
In the proof of Theorem \ref{theorem2}, after replacing $\frac{\partial}{\partial b_1}
\log f_{nh}(Y_i-x_i^\top \beta_0)$ by $$\sum_{j=1}^p a_j \frac{\partial}{\partial b_j}
\log f_{nh}(Y_i-x_i^\top \beta_0)$$ and following the same proof, we can obtain that
\begin{eqnarray*}
\frac{1}{\sqrt{n}} \alpha^\top\nabla \ell_{nh}(\beta_{0})&=&\frac{1}{\sqrt{n}} \sum_{i=1}^n\sum_{j=1}^p a_j \frac{\partial}{\partial b_j}
\log f(Y_i-x_i^\top \beta_0)+\sqrt p o_p(1)\\
&=&O_p(1)+\sqrt p o_p(1)
\end{eqnarray*}
as $n\to \infty$.
Lemmas \ref{lemma1}, \ref{lemma2} and Cauchy's residue theorem imply that
$$
n^{-1}\sum_{i=1}^n\frac{\partial^2}{\partial b_j\partial b_k}
\log f_{nh}(Y_i-x_i^\top \beta^{\ast}) \to E_{\beta_0} \Big[\frac{\partial^2}{\partial b_j\partial b_k}
\log f(Y_i-x_i^\top \beta_0)\Big]
$$
in probability as $n\to \infty$ uniformly in $j,k$. Hence
$$
\frac{1}{2n}\alpha^\top\nabla^{2} \ell_{nh}(\beta^{\ast}) \alpha=O(1)+pO_p(1)
$$
Now following the proof of fixed $p$ case and noting that $p=O(\lambda_n n^{1/2})$, we can complete the asymptotic part for divergent $p$.

Next we show the consistency.
Let $\eta_{n}=Cn^{-1/2}$, where $C$ is a constant. It is sufficient to show that with probability tending to 1 as $n\rightarrow \infty $, for any $\beta=(\beta_{1}^\top,b_{s+1},\cdots,b_p)^\top$ satisfying $\beta_{1}- \beta_{10} =O_{p}(n^{-1/2})$, we have , for $j=s+1,\dots, p$,
\begin{eqnarray}
  \frac{\partial Q(\beta)}{\partial b_{j}}  & < & 0 ~~\text{for} ~~0<b_{j} < \eta_{n} , \label{eq23}\\
   \frac{\partial Q(\beta)}{\partial b_{j}}  & > & 0 ~~ \text{for} ~~- \eta_{n}< b_{j} < 0. \label{eq24}
\end{eqnarray}
To show (\ref{eq23}), by Taylor expansion,
\begin{align*}
   \frac{\partial Q(\beta)}{\partial b_{j}} & = - \frac{\partial \ell_{nh}(\beta)}{\partial b_{j}} + n\lambda_{n}\hat{w}_{j}\text{sign}(b_{j}) \\
  & = - \frac{ \partial \ell_{nh}(\beta_{0})}{\partial b_{j}} - \sum_{l=1}^{p} \frac{\partial^{2} \ell_{nh}(\beta^{\ast})}{\partial b_{j}\partial b_{l}}(b_{l}-b_{l0})+ n\lambda_{n}\hat{w}_{j}\text{sign}(b_{j})
\end{align*}
 where $\beta^{\ast}$ lies between $\beta$ and $\beta_{0}$.

 From the proof of Theorem \ref{theorem2}, we have
 \begin{equation*}\label{eq25}
   \frac{1}{n}\frac{\ell_{nh}(\beta_{0})}{\partial b_{j}} = O_{p}(n^{-1/2})
 \end{equation*}
and
 \begin{equation*}\label{eq26}
   \frac{1}{n}  \frac{\partial^{2} \ell_{nh}(\beta^{\ast})}{\partial b_{j}\partial b_{l}} \xrightarrow{p} -\bm{I}_{jl}(\beta_{0})
 \end{equation*}
 as $n\to \infty$.

 By the assumption that $\beta_{1}$ satisfies $\beta_{1}- \beta_{10} =O_{p}(n^{-1/2})$, we have uniformly in $j$
 $$
  \frac{\partial Q(\beta)}{\partial b_{j}}= n[\lambda_{n} \hat{w}_{j}\text{sign}(b_{j}) + pO_{p}(n^{-1/2}) ]
 $$
with $\hat{w}_{j} \xrightarrow{p} |b_{j0}|^{-\gamma}$. So the sign of the derivative is completely determined by that of $b_{j}$, noting $p=O(\lambda_n n^{1/2})$ Hence, (\ref{eq23}) and (\ref{eq24}) follow. This completes the proof.
\end{proof}

\end{document}